\newcommand{\mb}[1]{{  \mathbf  #1}}  
\begin{document}
\newtheorem{theorem}{Theorem}
\newtheorem{acknowledgement}[theorem]{Acknowledgement}
\newtheorem{axiom}[theorem]{Axiom}
\newtheorem{case}[theorem]{Case}
\newtheorem{claim}[theorem]{Claim}
\newtheorem{conclusion}[theorem]{Conclusion}
\newtheorem{condition}[theorem]{Condition}
\newtheorem{conjecture}[theorem]{Conjecture}
\newtheorem{criterion}[theorem]{Criterion}
\newtheorem{definition}{Definition}
\newtheorem{exercise}[theorem]{Exercise}
\newtheorem{lemma}{Lemma}
\newtheorem{corollary}{Corollary}
\newtheorem{notation}[theorem]{Notation}
\newtheorem{problem}[theorem]{Problem}
\newtheorem{proposition}{Proposition}
\newtheorem{solution}[theorem]{Solution}
\newtheorem{summary}[theorem]{Summary}
\newtheorem{assumption}{Assumption}
\newtheorem{example}{\bf Example}
\newtheorem{remark}{\bf Remark}

\newtheorem{thm}{Corollary}[section]
\renewcommand{\thethm}{\arabic{section}.\arabic{thm}}

\def\qed{$\Box$}
\def\QED{\mbox{\phantom{m}}\nolinebreak\hfill$\,\Box$}
\def\proof{\noindent{\emph{Proof:} }}
\def\poof{\noindent{\emph{Sketch of Proof:} }}
\def
\endproof{\hspace*{\fill}~\qed
\par
\endtrivlist\unskip}
\def\endproof{\hspace*{\fill}~\qed\par\endtrivlist\vskip3pt}

\def\E{\mathsf{E}}
\def\eps{\varepsilon}
\def\phi{\varphi}
\def\Lsp{{\boldsymbol L}}
\def\Bsp{{\boldsymbol B}}
\def\lsp{{\boldsymbol\ell}}
\def\Ltsp{{\Lsp^2}}
\def\Lpsp{{\Lsp^p}}
\def\Linsp{{\Lsp^{\infty}}}
\def\LtR{{\Lsp^2(\Rst)}}
\def\ltZ{{\lsp^2(\Zst)}}
\def\ltsp{{\lsp^2}}
\def\ltZt{{\lsp^2(\Zst^{2})}}
\def\ninN{{n{\in}\Nst}}
\def\oh{{\frac{1}{2}}}
\def\grass{{\cal G}}
\def\ord{{\cal O}}
\def\dist{{d_G}}
\def\conj#1{{\overline#1}}
\def\ntoinf{{n \rightarrow \infty}}
\def\toinf{{\rightarrow \infty}}
\def\tozero{{\rightarrow 0}}
\def\trace{{\operatorname{trace}}}
\def\ord{{\cal O}}
\def\UU{{\cal U}}
\def\rank{{\operatorname{rank}}}
\def\acos{{\operatorname{acos}}}

\def\SINR{\mathsf{SINR}}
\def\SNR{\mathsf{SNR}}
\def\SIR{\mathsf{SIR}}
\def\tSIR{\widetilde{\mathsf{SIR}}}
\def\Ei{\mathsf{Ei}}
\def\l{\left}
\def\r{\right}
\def\lb{\left\{}
\def\rb{\right\}}

\setcounter{page}{1}

\newcommand{\eref}[1]{(\ref{#1})}
\newcommand{\fig}[1]{Fig.\ \ref{#1}}

\def\bydef{:=}
\def\ba{{\mathbf{a}}}
\def\bb{{\mathbf{b}}}
\def\bc{{\mathbf{c}}}
\def\bd{{\mathbf{d}}}
\def\bee{{\mathbf{e}}}
\def\bff{{\mathbf{f}}}
\def\bg{{\mathbf{g}}}
\def\bh{{\mathbf{h}}}
\def\bi{{\mathbf{i}}}
\def\bj{{\mathbf{j}}}
\def\bk{{\mathbf{k}}}
\def\bl{{\mathbf{l}}}
\def\bm{{\mathbf{m}}}
\def\bn{{\mathbf{n}}}
\def\bo{{\mathbf{o}}}
\def\bp{{\mathbf{p}}}
\def\bq{{\mathbf{q}}}
\def\br{{\mathbf{r}}}
\def\bs{{\mathbf{s}}}
\def\bt{{\mathbf{t}}}
\def\bu{{\mathbf{u}}}
\def\bv{{\mathbf{v}}}
\def\bw{{\mathbf{w}}}
\def\bx{{\mathbf{x}}}
\def\by{{\mathbf{y}}}
\def\bz{{\mathbf{z}}}
\def\b0{{\mathbf{0}}}

\def\bA{{\mathbf{A}}}
\def\bB{{\mathbf{B}}}
\def\bC{{\mathbf{C}}}
\def\bD{{\mathbf{D}}}
\def\bE{{\mathbf{E}}}
\def\bF{{\mathbf{F}}}
\def\bG{{\mathbf{G}}}
\def\bH{{\mathbf{H}}}
\def\bI{{\mathbf{I}}}
\def\bJ{{\mathbf{J}}}
\def\bK{{\mathbf{K}}}
\def\bL{{\mathbf{L}}}
\def\bM{{\mathbf{M}}}
\def\bN{{\mathbf{N}}}
\def\bO{{\mathbf{O}}}
\def\bP{{\mathbf{P}}}
\def\bQ{{\mathbf{Q}}}
\def\bR{{\mathbf{R}}}
\def\bS{{\mathbf{S}}}
\def\bT{{\mathbf{T}}}
\def\bU{{\mathbf{U}}}
\def\bV{{\mathbf{V}}}
\def\bW{{\mathbf{W}}}
\def\bX{{\mathbf{X}}}
\def\bY{{\mathbf{Y}}}
\def\bZ{{\mathbf{Z}}}

\def\bxi{{\boldsymbol{\xi}}}

\def\sT{{\mathsf{T}}}
\def\sH{{\mathsf{H}}}
\def\cmp{{\text{cmp}}}
\def\cmm{{\text{cmm}}}
\def\WPT{{\text{WPT}}}
\def\lo{{\text{lo}}}
\def\gl{{\text{gl}}}

\def\tT{{\widetilde{T}}}
\def\tF{{\widetilde{F}}}
\def\tP{{\widetilde{P}}}
\def\tG{{\widetilde{G}}}
\def\tbh{{\widetilde{\mathbf{h}}}}
\def\tbg{{\widetilde{\mathbf{g}}}}

\def\mA{{\mathbb{A}}}
\def\mB{{\mathbb{B}}}
\def\mC{{\mathbb{C}}}
\def\mD{{\mathbb{D}}}
\def\mE{{\mathbb{E}}}
\def\mF{{\mathbb{F}}}
\def\mG{{\mathbb{G}}}
\def\mH{{\mathbb{H}}}
\def\mI{{\mathbb{I}}}
\def\mJ{{\mathbb{J}}}
\def\mK{{\mathbb{K}}}
\def\mL{{\mathbb{L}}}
\def\mM{{\mathbb{M}}}
\def\mN{{\mathbb{N}}}
\def\mO{{\mathbb{O}}}
\def\mP{{\mathbb{P}}}
\def\mQ{{\mathbb{Q}}}
\def\mR{{\mathbb{R}}}
\def\mS{{\mathbb{S}}}
\def\mT{{\mathbb{T}}}
\def\mU{{\mathbb{U}}}
\def\mV{{\mathbb{V}}}
\def\mW{{\mathbb{W}}}
\def\mX{{\mathbb{X}}}
\def\mY{{\mathbb{Y}}}
\def\mZ{{\mathbb{Z}}}

\def\cA{\mathcal{A}}
\def\cB{\mathcal{B}}
\def\cC{\mathcal{C}}
\def\cD{\mathcal{D}}
\def\cE{\mathcal{E}}
\def\cF{\mathcal{F}}
\def\cG{\mathcal{G}}
\def\cH{\mathcal{H}}
\def\cI{\mathcal{I}}
\def\cJ{\mathcal{J}}
\def\cK{\mathcal{K}}
\def\cL{\mathcal{L}}
\def\cM{\mathcal{M}}
\def\cN{\mathcal{N}}
\def\cO{\mathcal{O}}
\def\cP{\mathcal{P}}
\def\cQ{\mathcal{Q}}
\def\cR{\mathcal{R}}
\def\cS{\mathcal{S}}
\def\cT{\mathcal{T}}
\def\cU{\mathcal{U}}
\def\cV{\mathcal{V}}
\def\cW{\mathcal{W}}
\def\cX{\mathcal{X}}
\def\cY{\mathcal{Y}}
\def\cZ{\mathcal{Z}}
\def\cd{\mathcal{d}}
\def\Mt{M_{t}}
\def\Mr{M_{r}}
\def\O{\Omega_{M_{t}}}
\newcommand{\figref}[1]{{Fig.}~\ref{#1}}
\newcommand{\tabref}[1]{{Table}~\ref{#1}}

\newcommand{\fb}{\tx{fb}}
\newcommand{\nf}{\tx{nf}}
\newcommand{\BC}{\tx{(bc)}}
\newcommand{\MAC}{\tx{(mac)}}
\newcommand{\Pout}{p_{\mathsf{out}}}
\newcommand{\nnn}{\nn\\}
\newcommand{\FB}{\tx{FB}}
\newcommand{\TX}{\tx{TX}}
\newcommand{\RX}{\tx{RX}}
\renewcommand{\mod}{\tx{mod}}
\newcommand{\m}[1]{\mathbf{#1}}
\newcommand{\td}[1]{\tilde{#1}}
\newcommand{\sbf}[1]{\scriptsize{\textbf{#1}}}
\newcommand{\stxt}[1]{\scriptsize{\textrm{#1}}}
\newcommand{\suml}[2]{\sum\limits_{#1}^{#2}}
\newcommand{\sumlk}{\sum\limits_{k=0}^{K-1}}
\newcommand{\eqhsp}{\hspace{10 pt}}
\newcommand{\tx}[1]{\texttt{#1}}
\newcommand{\Hz}{\ \tx{Hz}}
\newcommand{\sinc}{\tx{sinc}}
\newcommand{\diag}{\mathrm{diag}}
\newcommand{\MAI}{\tx{MAI}}
\newcommand{\ISI}{\tx{ISI}}
\newcommand{\IBI}{\tx{IBI}}
\newcommand{\CN}{\tx{CN}}
\newcommand{\CP}{\tx{CP}}
\newcommand{\ZP}{\tx{ZP}}
\newcommand{\ZF}{\tx{ZF}}
\newcommand{\SP}{\tx{SP}}
\newcommand{\MMSE}{\tx{MMSE}}
\newcommand{\MINF}{\tx{MINF}}
\newcommand{\RC}{\tx{MP}}
\newcommand{\MBER}{\tx{MBER}}
\newcommand{\MSNR}{\tx{MSNR}}
\newcommand{\MCAP}{\tx{MCAP}}
\newcommand{\vol}{\tx{vol}}
\newcommand{\ah}{\hat{g}}
\newcommand{\tg}{\tilde{g}}
\newcommand{\teta}{\tilde{\eta}}
\newcommand{\heta}{\hat{\eta}}
\newcommand{\uh}{\m{\hat{s}}}
\newcommand{\eh}{\m{\hat{\eta}}}
\newcommand{\hv}{\m{h}}
\newcommand{\hh}{\m{\hat{h}}}
\newcommand{\Po}{P_{\mathrm{out}}}
\newcommand{\Poh}{\hat{P}_{\mathrm{out}}}
\newcommand{\Ph}{\hat{\gamma}}
\newcommand{\mat}[1]{\begin{matrix}#1\end{matrix}}
\newcommand{\ud}{^{\dagger}}
\newcommand{\C}{\mathcal{C}}
\newcommand{\nn}{\nonumber}
\newcommand{\nInf}{U\rightarrow \infty}

\title{Towards Atomic MIMO Receivers}
\author{{Mingyao Cui,~\IEEEmembership{Graduate Student Member, IEEE}, Qunsong Zeng,~\IEEEmembership{Member, IEEE}, and Kaibin Huang,~\IEEEmembership{Fellow, IEEE}}
\thanks{The authors are with the Department of Electrical and Electronic Engineering, The University of Hong Kong, Hong Kong. Corresponding authors: K. Huang (Email: huangkb@eee.hku.hk); Q. Zeng (Email: qszeng@eee.hku.hk).}}

\maketitle

\thispagestyle{empty}
\pagestyle{empty}

\begin{abstract}
The advancement of Rydberg atoms in quantum information technology is driving a paradigm shift from classical \emph{radio-frequency} (RF) receivers to atomic receivers. 
Capitalizing on the extreme sensitivity of Rydberg atoms to external 
electromagnetic fields,  atomic receivers are capable of realizing more precise 
radio-wave measurements than RF receivers to support high-performance wireless 
communication and 
sensing. 
Although the atomic receiver is developing rapidly in 
quantum-physics domain, its integration with wireless communications is at a 
nascent stage. In particular, systematic methods to enhance communication 
performance through this integration are yet to be discovered.
Motivated by this observation, we propose in this paper to incorporate atomic 
receivers into \emph{multiple-input-multiple-output} (MIMO) communication, a 
prominent 5G technology, as the first attempt on implementing atomic MIMO 
receivers.
To begin with, we provide a comprehensive introduction on the principles of 
atomic receivers and build on them to design the atomic MIMO receivers.
Our findings reveal that signal detection of atomic MIMO receivers corresponds 
to a non-linear biased \emph{phase retrieval} (PR) problem, as opposed to the 
linear Gaussian model adopted in classical MIMO systems. 
Then, to recover signals from this non-linear model, we modify and
integrate the Gerchberg-Saxton (GS) algorithm, a typical PR solver, with a 
biased GS algorithm to solve the biased PR problem. 
Moreover, we propose a novel Expectation-Maximization GS (EM-GS) algorithm to cope with the unique Rician distribution of the biased PR model. 
Our EM-GS algorithm introduces a high-pass filter constructed by the ratio of 
Bessel functions into the iteration procedure of GS, thereby improving the 
detection accuracy without sacrificing the computational efficiency. 
Finally, the effectiveness of the devised algorithms and the feasibility of atomic MIMO receivers are demonstrated by theoretical analysis and numerical simulation. 




\end{abstract}

\begin{IEEEkeywords}
Atomic receivers, multiple-input-multiple-output (MIMO), phase retrieval, quantum sensing.
\end{IEEEkeywords}

\section{Introduction}\label{sec:1}
With the global deployment of commercial 5G services, cutting-edge research on 
6G has been launched worldwide~\cite{6GReview_Zhang2019}. Meanwhile, the 
revolutionary \emph{quantum information technologies} are evolving 
rapidly and deemed as a key enabler of the next-generation 
computing, communication, and sensing~\cite{QuanComp_Gyongyosi2019, 
QuanComm_Wang2022, QuanSense_Degen2017}. 
For example, ever since the invention of Schor's algorithm, quantum computing 
has exhibited unparalleled abilities in quadratically or even exponentially 
accelerating the computing speed in solving classically intractable problems, 
ranging from factorization to optimization to artificial intelligence~\cite{QuanComp_Gyongyosi2019}.
On the other hand, leveraging quantum mechanisms such as superposition and 
entanglement, quantum communication is capable of establishing unconditionally 
secure links to protect data privacy~\cite{QuanComm_Wang2022}. 
In parallel with communication and computing, quantum sensing is another 
promising branch of quantum information technologies~\cite{QuanSense_Degen2017}. It 
employs microscopic particles as ``sensors" and capitalizes on their strong 
sensitivity to external disturbance, especially electromagnetic fields, to 
precisely measure physical quantities. Notable examples of quantum sensors 
include 
atomic clocks and superconducting interferometers. As an exciting trend,  
quantum sensors are being applied to an increasingly broad range of practical 
use cases~\cite{QuanSense_Degen2017, QuanSense_Zhang2023, RydMag_Fancher2021}.
Relevant techniques promise to reshape the information society and our 
perception of the environment. Aligned with this end, our work advocates the 
use of quantum sensors to revolutionize next-generation wireless receivers.  



\subsection{Concept and Properties of Atomic Receiver}
Recently, at the intersection of quantum sensing and wireless communications, 
an emerging concept known as atomic receiver has attracted extensive 
attentions, for their capabilities of high-precision measurement of 
electromagnetic 
fields~\cite{RydMag_Fancher2021, liu_continuous-frequency_2022, liao_microwave_2020}. 
An atomic receiver leverages \emph{Rydberg atoms} as ``antennas" to detect the amplitude, phase, frequency, and polarization of incident electromagnetic waves~\cite{RydMag_Liu2023, zhong_polar_2023, liu_deep_2022}. 
To elaborate, Rydberg atoms refer to highly excited atoms, whose outermost 
electrons are excited to high energy levels far from their nuclei~\cite{RydReview_Saffman2010}. 
Attributed to their large electric dipole moments and high polarizability, 
such atoms are extremely sensitive to external electromagnetic fields, 
rendering them excellent ``atomic antennas". 
To implement wireless receivers, a modulated electromagnetic wave first 
interacts with Rydberg atoms to trigger their electronic transitions between 
certain energy levels. 
The resultant intensity of electron transition can be monitored using an 
all-optical readout mechanism, termed the \emph{electromagnetically induced 
transparency} (EIT), to facilitate the ensuing information demodulation~\cite{RydMag_Fancher2021}.  
In contrast with the macroscopic measurement of electromagnetic fields by 
conventional \emph{radio-frequency} (RF) receivers, the atomic 
counterparts operate at the atom-scale microcosm. 
This fundamentally different principle enables the latter to achieve accuracies 
at the quantum limit and have the potential of working in tandem with or even 
replacing conventional RF receivers~\cite{RydMag_Art2022}. 

Benefiting from the advancements in quantum sensing, atomic receivers offer 
several advantages over conventional receivers for wireless communication and 
sensing.  
First, atomic receivers can achieve much higher detection-and-sensing 
accuracies as mentioned earlier~\cite{QuanSense_Degen2017}. 
For example, as demonstrated in~\cite{QuanSense_Zhang2023}, atomic receivers 
can achieve the sub-millimeter (sub-mmWave) level granularity in detecting 
object vibrations using WiFi sources as opposed to the mmWave level granularity 
attained by traditional receivers.
The second advantage lies in low thermal noise. 
Conventional receivers require RF circuits to intercept radio waves and 
downconvert RF signals to the baseband, which comprise components such as analog amplifiers and mixers that generate significant thermal noise.
In contrast, atomic receivers employ all-optical components for sensing and 
downconversion, and thus are immune to thermal noise~\cite{RydMag_Liu2023}.
The above two advantages make atomic receivers a promising technology for 
long-range 
communications~\cite{RydMag_Fancher2021}, such as satellite communications and space-air-ground networks~\cite{liu_space-air-ground_2018}, where detecting weak electric signals is crucial.  
Last but not the least, the flexible tunability of Rydberg atoms across an 
ultra-wide spectrum is a significant advantage~\cite{RydMag_Art2022}.
Traditional antennas, with a fixed size typically half the wavelength of 
the carrier, are limited in the spectrum bandwidth they can listen to. 
Detecting 
multi-band signals requires an array of costly RF receivers, each of which is 
matched to a specific band~\cite{Multiband_Sim2020}.
In contrast, the numerous energy levels in Rydberg atoms allow an atomic 
receiver to respond to a broad range of frequencies from 100 Megahertz (MHz) to 
1 Terahertz (THz) by exciting electrons to different energy levels~\cite{RydMultiband_Du2022}. { This capability enables the simultaneous 
measurement of multi-band signals over an extremely wide spectrum using a 
single atomic receiver, laying the foundation for constructing an 
extraordinarily cost-effective 
full-frequency communication platform.
To materialize the above visions, this work incorporates atomic receivers into 
modern wireless communication systems.}

\subsection{State-of-the-art Atomic Receivers}
Existing research on atomic receivers primarily focuses on the lab demonstrations of different signal modulation schemes. 
It is well-known that via the interaction between electromagnetic fields and Rydberg atoms, the strength and frequency of modulated radio waves can be transduced into the intensity of electron transition, called the \emph{Rabi frequency}. 
Hence, by inferring symbols from the Rabi frequencies, both the \emph{amplitude demodulation} and \emph{frequency demodulation} at $37.4$ GHz and $29.5$ GHz have been realized, respectively~\cite{RydAMFM_Anderson2021}. 
For \emph{phase demodulation}, researchers employed the \emph{holographic phase-sensing} methodology, involving the use of a \emph{local oscillator} (LO) to generate a reference electromagnetic field with a known phase that interferes with the incident modulated electromagnetic field~\cite{RydPhase_Simons2019, RydNP_Jing2020, RydPhase_And2020}. Then, the phase-modulated symbols can be extracted from the phase differences between the aforementioned fields. In~\cite{RydPhase_Meyer2018}, a channel capacity of up to 8.2 Mbps was measured using the 8-state phase-shift-keying modulation. This holographic principle was then extended to support \emph{angle-of-arrival} estimation~\cite{RydAOA_Robinson2021} and micro-vibration monitoring~\cite{QuanSense_Zhang2023}. Another vein of research exploits the capability of atomic receivers for multi-band communications~\cite{RydMultiband_Du2022, RydMultiband_Meyer2023}. For example, it was proposed in~\cite{RydMultiband_Meyer2023}  to excite electrons to the same initial Rydberg energy level but transfer them to different final Rydberg energy levels via the interaction with separate frequency bands. This approach enables the co-detection of 5 frequency bands at 1.72, 12.11, 27.42, 65.11, and 115.75 GHz. 

Despite the rapid development of atomic receivers in the quantum-sensing domain, their application in wireless communication is still at a nascent stage. 
The state-of-the-art designs are suitable only for primitive communication scenarios and schemes, such as a point-to-point system or a \emph{single-input-single-output} (SISO) air interface. 
Without advanced designs and advices, the full potential of atomic receivers cannot be unleashed, with the state-of-the-art achieving only a Mbps-scale channel capacity~\cite{RydPhase_Meyer2018, RydMultiband_Du2022, RydMultiband_Meyer2023}. 
Communication techniques that are customized for atomic receivers are largely uncharted. This calls for research on atomic-level channel coding, massive \emph{multiple-input-multiple-output} (MIMO), multi-user communications, \emph{integrated sensing and communication} (ISAC), \emph{reconfigurable intelligent surface} (RIS), and edge learning~\cite{MIMO_LU2014, RIS_Zhang2023,EdgeAI_Mao2017}. 
{ To contribute to this emerging field}, this paper seeks the feasibility of integrating MIMO communications and atomic receivers. 
 MIMO, a key 5G technology, features spatial multiplexing of parallel data links and enhancement of link reliability via diversity gain~\cite{MIMO_LU2014}. 
It is believed that the consideration of MIMO has the potential to overcome the capacity and sensitivity bottleneck of current atomic SISO receivers.

\subsection{Contributions and Organization}
To the best of our knowledge, this work represents the first attempt on designing and analyzing atomic MIMO receivers. We propose a novel framework for atomic MIMO receiver by building on the basic mechanisms of quantum sensing. 
    In this framework, phase-modulated symbols simultaneously transmitted by multiple users are detected by multiple atomic antennas using the mentioned scheme of holographic phase-sensing. 
The key contributions and findings are summarized as follows. 

\begin{itemize}
    \item \textbf{Modeling of atomic MIMO receiver:} 
    Our model of atomic MIMO receiver reveals that its signal detection is essentially a \emph{biased phase retrieval (PR) problem}. 
    In particular, the transmitted symbols are modulated in the coupling strength between the electric dipole moment of Rydberg atoms and the incident radio waves from the users and the reference source.
    The atomic MIMO receiver capitalizes on this strength to infer multi-user symbols without the help of phase information.  
    This finding highlights one crucial departure of atomic MIMO system from its traditional counterpart: as opposed to the linear transmission model fitting the latter, the former operates in a \emph{non-linear transmission model}. 
    \item \textbf{Signal detection for atomic MIMO receiver:}
    To solve the discovered biased PR problem, we propose two signal detection algorithms: the biased Gerchberg-Saxton (GS) and the Expectation-Maximization GS (EM-GS) algorithms, which are based on the \emph{least square} (LS) and \emph{maximum likelihood} (ML) criteria, respectively. 
    First, the biased GS algorithm extends the classic GS algorithm~\cite{PR_Netrapalli2015}, a popular PR solver, by eliminating the bias caused by the reference source in the initialization and iteration steps of the classic algorithm.
    Next, the proposed EM-GS algorithm employs the Bayesian regression to perform ML detection. Its novelty lies in treating the unobserved phase information as a latent variable and thereby decoupling the intricate ML problem into a sequence of tractable linear regression problems with analytical solutions. 
    A comparison of the proposed EM-GS algorithm to the biased GS algorithm reveals that the former features an additional high-pass filter as constructed by the ratio of Bessel functions. This entails the improvement of detection accuracy without increasing computational complexity.
    \item \textbf{Performance study:}
    We employ two metrics, namely the \emph{normalized mean square error} (NMSE) and the \emph{bit error ratio} (BER), to validate the feasibility of atomic MIMO receivers as well as the effectiveness of the proposed algorithms. 
    Targeting the NMSE metric, the \emph{Cramér-Rao lower bound} (CRLB) is derived. 
    Numerical results demonstrate that the proposed EM-GS algorithm consistently approaches the CRLB and outperforms the biased GS algorithm, especially in the low \emph{signal-to-noise ratio} (SNR) regime. 
    Regarding the BER performance, simulation results confirm that the biased GS and EM-GS algorithms achieve BERs close to those from an exhaustive search,  while being much more computationally efficient.
\end{itemize}

The remainder of this paper is organized as follows. The preliminaries of  atomic SISO receivers are provided in Section~\ref{sec:2}. The framework of atomic MIMO receiver and the corresponding detection algorithms are presented in Section~\ref{sec:3} and Section~\ref{sec:4}, respectively.  
Section V extends the proposed designs to channel estimation problems.
In Section~\ref{sec:5}, we validate the proposed design by numerical results. Finally, conclusions are drawn in Section~\ref{sec:6}. 





\ifx\onecol\undefined
\begin{figure*}
   \centering
   \includegraphics[width=6in]{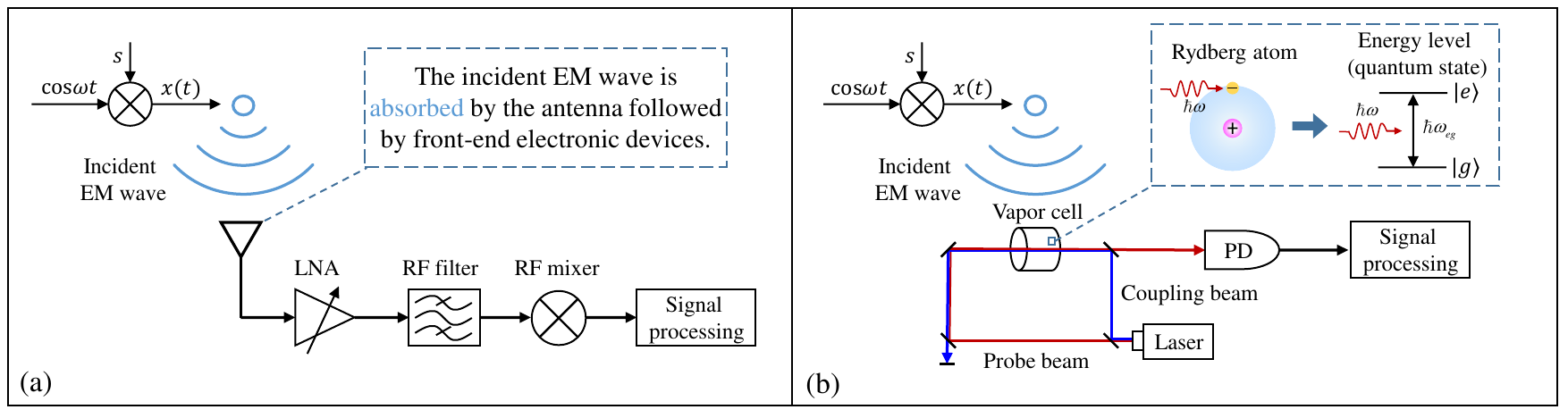}
   \vspace*{-1em}
   \caption{\color{black} Wireless communication systems based on (a) classical 
   receiver and (b) atomic receiver. 
   }
   \label{img:Receivers}
   \vspace*{-1em}
\end{figure*}
\else 
\begin{figure*}
   \centering
   \includegraphics[width=6in]{Figures/Receivers.pdf}
   \caption{Wireless communication systems based on (a) classical receiver and (b) atomic receiver. (c) Rydberg atoms and energy level diagram.
   }
   \label{img:Receivers}
\end{figure*}
\fi 

\section{Preliminaries of Atomic Wireless Receiver}\label{sec:2}
To help understanding of the proposed framework, preliminaries of atomic receiver for a SISO air interface is provided in this section. We first present the architecture of an atomic receiver and explain the underlying physical laws of Rydberg atoms as antennas, followed by an introduction to quantum jump enabling wireless signal detection. Last, we end this section with the implementation of atomic receivers.

\subsection{Atomic Receiver Architecture}\label{sec:2A}
To begin with, the wireless communication systems employing classical and atomic receivers are differentiated in Fig.~\ref{img:Receivers}. 
The two systems are identical in terms of transmitter architecture and channel. Specifically, data symbols are pre-processed using electronic circuits and modulated onto electromagnetic waves. 
Their distinction lies in the receiver architecture as elaborated in the sequel.
A classical receiver utilizes a dipole antenna to transduce incident  waves into signals and decodes from them transmitted bits using a series of electronic components, such as mixers, amplifiers, filters, and baseband professors.
In contrast, an atomic receiver employs a different signal-detection paradigm. 
Such a receiver deploys Rydberg atoms as its ``antennas" to measure the amplitude, phase, frequency, and polarization of an electromagnetic field, leveraging their exceptional sensitivity to electromagnetic waves ranging from MHz to THz. 
Free from energy-consuming and bulky electronic components, a vapor cell filled with Rydberg atoms suffices for sensing the wireless environment. 
The underpinning mechanism involves the incident electromagnetic wave interacting with Rydberg atoms to change their quantum states, which are monitored using an all-optical readout device, such as a \emph{photodetector} (PD). The output is processed by a signal-detection algorithm to decipher the transmitted bits.
In summary, the architecture of an atomic-receiver based communication system comprises a classic transmitter, a classic channel, and a quantum receiver. 

\subsection{Rydberg Atoms as Antennas}\label{sec:2B}
\subsubsection{Atomic Energy Levels}
To grasp the concept of atomic receiver, it is essential to first understand the principle of Rydberg atoms. 
Each atom consists of a positively charged nucleus surrounded by multiple negatively charged electrons. 
According to the Rutherford–Bohr model, the Coulomb potential from the nucleus confines the electrons to discrete orbits with distinct energy levels~\cite{AtomicPhysics}. 
An energy level can be denoted as $n\ell_j$, where $n$, $\ell$, and $j$ represent the principal quantum number, the orbit angular momentum quantum number, and the total angular momentum quantum number, respectively~\cite{QuantumMechanism_Zwiebach2006}.
While $n$ and $j$ are numerical, $\ell$ is assigned a letter from $S,P,D,F,\cdots,$ in an ascending order according to the spectroscopic notation.
For instance, $n\ell_j = 50P_{1/2}$.
Let $E_{n\ell j}$ denote the energy corresponding to the energy level $n\ell_j$. In particular, for a hydrogen-like atom, 
\begin{align} \label{eq:EnergyLevel}
    E_{n\ell j} = -\frac{hc{\rm Ry}}{(n-\delta_{nlj})^2},
\end{align}
where $h=6.626\times10^{-34}$ J$\cdot$s is the Planck constant, $c=2.998\times10^8$ m$\cdot$s$^{-1}$ the speed of light, ${\rm Ry} = 1.079\times 10^{7}$~m${^{-1}}$ the Rydberg constant, and $\delta_{nlj}$ the quantum defect constant specific for each energy level~\cite{RydMag_Fancher2021}.
For ease of notation, we omit the subscript $\ell j$ of $E_{n\ell j}$ in the sequel. 
Besides, it is useful to define the \emph{intrinsic frequency} of each energy level as $\omega_n \overset{\Delta}{=} \frac{E_n}{\hbar}$, and the \emph{transition frequency} between two energy levels as $\omega_{nm} \overset{\Delta}{=} \omega_n - \omega_m = \frac{E_n - E_m}{\hbar}$, where $\hbar = \frac{h}{2\pi}$ is the reduced Planck constant.  

\subsubsection{Quantum Jump}
Electrons can jump between energy levels by either absorbing or emitting photons~\cite{QuantumOptics_Fox2006}. 
When a photon is absorbed, the electron is excited to a higher energy level, while transitioning to a lower energy level results in photon emission. 
 This peculiar quantum phenomenon is known as quantum jump or electron transition. 
It is worth mentioning that the electromagnetic radiation/sensing essential for wireless communication is a special form of photon emission/absorption due to the wave–particle duality~\cite{QuantumMechanism_Zwiebach2006}. 
The energy conservation law dictates that an electron transition between two energy levels $E_n$ and $E_m$ ($E_n > E_m$) can occur only when the angular frequency $\omega$ of electromagnetic wave is close to the transition frequency, $\omega_{nm}$, (i.e., near-resonance or resonance). 
The quantum jumps in response to carrier frequencies are the fundamental phenomenon enabling atomic receivers.

\subsubsection{Definition of Rydberg Atom}
An atom becomes a Rydberg atom when at least one of its electrons is excited to a high energy level (Rydberg state) with a principal quantum number $n > 20$~\cite{RydReview_Saffman2010}.
As depicted in Fig.~\ref{img:Receivers}(b), to prepare Rydberg atoms, two laser beams of different frequencies and intensities (i.e., a weak probe beam and a strong coupling beam)  
are steered in opposite directions to propagate through a vapor cell filled with alkali-metal atoms~\cite{RydAMFM_Anderson2021}. 
The probe beam excites the outermost electrons from their ground level to a low excited state (e.g., $6S_{1/2}\rightarrow6P_{1/2}$). 
The coupling beam then further triggers these electrons to jump from the state to a highly excited Rydberg state (e.g., $6P_{1/2}\rightarrow47S_{1/2}$), thereby creating Rydberg atoms. 
Apart from the preparation of Rydberg states, the probe beam also serves the purpose of reading out the quantum state of Rydberg atoms, as elaborated in Section \ref{sec:2D}. 

\subsubsection{Properties of Rydberg Atom}
Rydberg atoms exhibit several properties useful for wireless communication due to the large electron-nucleus separation distance. 
First, Bohr's model indicates that the orbit radius of an electron is proportional to $n^2$~\cite{RydReview_Saffman2010}. 
The large principal quantum number in a Rydberg atom results in a large atomic radius and, consequently, makes Rydberg atoms extremely sensitive to incident electromagnetic waves, enabling accurate signal detection.
Second, it follows from \eqref{eq:EnergyLevel} that the energy spacing between two adjacent energy levels is approximately $\Delta E \propto\frac{1}{n^3}$. 
This scaling reveals that energy levels become denser with increasing $n$. 
As a result, the set of available transition frequencies $\omega_{nm}$ becomes larger to cover a wide range of spectrum from MHz to THz~\cite{QuanSense_Zhang2023}. 
For example, the transition between energy levels $52D_{5/2}$ and $53P_{3/2}$ corresponds to a transition frequency $\omega_{nm} = 2\pi \times 5$ GHz, which falls within the sub-6G band. 
On the other hand, mmWave signals at $28$ GHz can trigger the transition between energy levels $61D_{5/2}$ and $63P_{1/2}$. 
{ The preceding advantages inspire researchers to develop atomic receivers based on Rydberg atoms to realize wireless detection across a vast range of communication bands~\cite{RydMultiband_Du2022, RydMultiband_Meyer2023}.}

\subsection{Electromagnetic Wave and Quantum Jump}\label{sec:2C}

\subsubsection{Quantum State of Rydberg Atom}
Let $P$ and $s$ denote transmit power and baseband signal in a narrowband system, respectively. 
Consider the system in Fig.~\ref{img:Receivers}(b), the incident electromagnetic wave at the receiver is given as 
\begin{align} \label{eq:RFSignal}
    \mb{E}(t) = \boldsymbol{\epsilon } \sqrt{P}\rho s \cos (\omega t + \phi) = [E_x(t),E_y(t), E_z(t)]^T,
\end{align}
where $\rho$ represents the path loss, $\phi$ the phase shift, and the unit-vector, $\boldsymbol{\epsilon} = [\epsilon_x, \epsilon_y, \epsilon_z]^T$ with $\|\boldsymbol{\epsilon}\|_2 = 1$, the polarization direction. 
The receiver harnesses a vapor cell filled with Rydberg atoms to detect the wave by observing quantum jumps. 
Specifically, we denote the energy levels of the Rydberg ground and excited states as $E_g = \hbar \omega_g$ and $E_e = \hbar \omega_e$. The frequency $\omega$ of electromagnetic wave is set close to the transition frequency $\omega_{eg} = \omega_e - \omega_g$ such that quantum jumps are primarily between these two energy levels. 
Then, the quantum state of this two-level system is expressed by a $2 \times 1$ complex vector\footnote{In quantum physics, the Dirac notation is widely used for quantum states, where a ``bra'' represents the conjugate transpose of a ket vector, denoted as $\bra{\psi}$, while a ``ket'' represents a vector in Hilbert space, denoted as $\ket{\psi}$. The inner product of two kets $\ket{\psi}$ and  $\ket{\phi}$ is expressed as $\bra{\psi}\ket{\phi}$~\cite{QuantumMechanism_Zwiebach2006}.}, 
whose ground and excited states are given as $\ket{g} = [0, 1]^T$ and $\ket{e} = [1, 0]^T$, respectively.
A general quantum state, $\ket{\psi}$, is  modeled by the linear combination of $\ket{g}$ and $\ket{e}$:
\begin{align} \label{eq:State}
    \ket{\psi} = \alpha_g \ket{g} + \alpha_e \ket{e} = [\alpha_e, \alpha_g]^T,
\end{align}
where the complex coefficients, $\alpha_g$ and $\alpha_e$, are called the probability amplitudes of $\ket{\psi}$, satisfying $|\alpha_g|^2 + |\alpha_e|^2 = 1$. 
When $\ket{\psi}$ is measured on the basis $\{\ket{g}, \ket{e}\}$, the outcome is $\ket{g}$ with probability $|\alpha_g|^2$ and $\ket{e}$ with probability $|\alpha_e|^2$.
\subsubsection{State Evolution of Rydberg Atom}
For a time-varying system we consider, i.e., $\ket{\psi(t)} = \alpha_g(t) \ket{g} + \alpha_e(t) \ket{e}$, the evolution of quantum state is governed by the Schrödinger equation~\cite{QuantumMechanism_Zwiebach2006}:
\begin{align}\label{eq:Schrodinger}
    i\hbar \frac{\partial \ket{\psi(t)}}{\partial t} = (\hat{H} + \hat{V})\ket{\psi(t)}, 
\end{align}
where $i = \sqrt{-1}$. The operators $\hat{H}$ and $\hat{V}$ are known as the free Hamiltonian and interaction Hamiltonian operators, which are defined as follows. 
\begin{itemize}
    \item \textbf{Free Hamiltonian}: The free Hamiltonian $\hat{H}$ represents the energy of the isolated nucleus-electron system. 
By acting $\hat{H}$ on the ground state $\ket{g}$ and the excited state $\ket{e}$, we can obtain their energy levels $E_g = \hbar \omega_g$ and $E_e = \hbar \omega_e$.
In this context, $\hat{H}$ is expressed by a $2\times 2$ diagonal matrix:
\ifx\onecol\undefined
\begin{align}\label{eq:FreeHamiltonian}
    \hat{H} = \hbar\omega_e \ket{e}\bra{e} + \hbar\omega_g \ket{g}\bra{g} = \diag\{\hbar \omega_e, \hbar \omega_g\}.
\end{align}
\else 
\begin{align}\label{eq:FreeHamiltonian}
    \hat{H} = \hbar\omega_e \ket{e}\bra{e} + \hbar\omega_g \ket{g}\bra{g} = \diag\{\hbar \omega_e, \hbar \omega_g\}.
\end{align}
\fi

\item\textbf{Interaction Hamiltonian}: The Hamiltonian $\hat{V}$ accounts for the interaction energy of the incident electromagnetic field \eqref{eq:RFSignal} acting on this two-level system. 
Using Rutherford–Bohr model, we define the relative position from the nucleus to the electron as $\mb{r}$. 
Then, the interaction energy in classical physics is determined by $V = q\mb{r}^T\mb{E}(t) = \boldsymbol{\mu}^T\mb{E}(t)$, where $q$ is the charge of an electron and $\boldsymbol{\mu} = q\mb{r}$ the electric dipole moment. 
In quantum physics, the position vector $\mb{r}$ is translated to a position operator  $\hat{\mb{r}} = [\hat{r}_x, \hat{r}_y, \hat{r}_z]^T$. 
Each entry of $\hat{\mb{r}}$ refers to a quantum operator denoted by a $2\times 2$ Hermitian matrix. 
According to the \emph{selection role} of electron transition~\cite{QuantumOptics_Fox2006}, the expansion of $\hat{r}_d$, $d\in\{x,y,z\}$ on the basis $\{\ket{e}, \ket{g}\}$ is expressed as $\hat{r}_d = \bra{e}\hat{r}_d\ket{g} \ket{e}\bra{g} + \bra{g}\hat{r}_d\ket{e} \ket{g}\bra{e}$, where $\bra{e}\hat{r}_d\ket{g} = \bra{g}\hat{r}_d\ket{e}^*$. 
Given this operator, the interaction Hamiltonian $\hat{V} = q\hat{\mb{r}}^T\mb{E}(t)$ is transferred from the interaction energy $V=q\mb{r}^T\mb{E}(t)$ as
\begin{align}\label{eq:InterHamiltonian}
    \hat{V} =
     \sum_{d\in\{x,y,z\}}q\hat{r}_d E_{d}(t) = \left(\begin{array}{cc}
      0   &  \xi(t)  \\
        \xi^*(t)  & 0
    \end{array}\right),
\end{align}
where  $\xi(t) \overset{\Delta}{=}(\boldsymbol{\mu}_{eg}^T\boldsymbol{\epsilon})\sqrt{P}\rho s\cos(\omega t + \phi)$ and $\boldsymbol{\mu}_{eg}  = q\bra{e}\hat{\mb{r}}\ket{g} = q[\bra{e}\hat{r}_x\ket{g}, \bra{e}\hat{r}_y\ket{g}, \bra{e}\hat{r}_z\ket{g}]^T$. 
The vector $\boldsymbol{\mu}_{eg}$ is called the transition dipole moment~\cite{AtomicPhysics}. 
\end{itemize}


Assume the initial state is $\alpha_g(0) = 1$ and $\alpha_e(0) = 0$. 
We can solve the Schrödinger equation in \eqref{eq:Schrodinger} using the method of rotating-wave approximation (RWA)~\cite{QuantumOptics_Fox2006}. As a result, the probability, $|\alpha_e(t)|^2$, is derived as 
\begin{align} \label{eq:RabiOsc}
    |\alpha_e(t)|^2 = \frac{\Omega_R^2}{\Omega_R^2 + \delta^2} \sin^2\left(\frac{\sqrt{\Omega_R^2 + \delta^2}}{2} t\right),
\end{align}
where $\delta \overset{\Delta}{=} \omega_{eg} - \omega$ denotes the \emph{detuning} of the carrier frequency $\omega$ from the transition frequency $\omega_{eg}$. In \eqref{eq:RabiOsc}, $\Omega_R \overset{\Delta}{=} \frac{|(\boldsymbol{\mu}_{eg}^T\boldsymbol{\epsilon}) \sqrt{P}\rho s|}{\hbar}$ denotes  the \emph{Rabi frequency}, defined as the frequency at which the probability amplitudes of two atomic energy levels fluctuate in response to an oscillating electromagnetic field~\cite{QuantumOptics_Fox2006}. 
For ease of notation, we introduce $\Omega \overset{\Delta}{=} \sqrt{\Omega^2_R + \delta^2}$ to represent the \emph{effective Rabi frequency}. 

The oscillation at the Rabi frequency as depicted in \eqref{eq:RabiOsc}, known as \emph{Rabi oscillation}, results from electrons' back-and-forth transitions between ground and excited states. This causes time-variation of electron populations over energy levels. When the carrier frequency exactly matches the transition frequency (i.e., $\omega = \omega_{eg}$), the excited-state probability in \eqref{eq:RabiOsc} is simplified as $|\alpha_e(t)|^2=\sin^2\left(\frac{\Omega_R}{2}t\right)$ with the oscillation magnitude at its maximum of one. On the other hand, as the carrier frequency $\omega$ deviates from $\omega_{eg}$, the magnitude, given as $\frac{\Omega_R^2}{\Omega_R^2 + \delta^2}$, reduces as the detuning factor, $\delta$, increases; the effective Rabi frequency $\Omega$, however, grows. The above facts reflect the ability of a resonant (or near-resonant) electromagnetic wave to stimulate significant changes on the quantum states of Rydberg atoms. 

\subsection{Integrated Demodulation and Down-conversion via Rabi Oscillation}
The aforementioned Rabi oscillation in Rydberg atoms provides a mechanism for realizing integrated demodulation and downconversion, which are the two key operations of the atomic receiver. The details are described as follows.

    \subsubsection{Demodulation} The phenomenon of Rabi oscillation is useful for demodulation because the transmit symbol, $s$, and carrier frequency, $\omega$, are encoded in the Rabi frequency, $\Omega_R$, and effective Rabi frequency, $\Omega$. Consider the realization of amplitude demodulation first of all,  $\Omega_R$ and $\Omega$ are proportional to the norm of $s$. Therefore, the stronger the symbol energy $|s|^2$, the more frequent the quantum jump occurs. 
    This property can be exploited by atomic receivers to infer amplitude-modulated signals by reading the Rabi frequencies. Next, consider frequency demodulation. Though the strength of $s$ is fixed, the carrier frequency, $\omega$, and the detuning factor, $\delta = \omega_{eg} - \omega$, change with time. 
    Note that the changes on $\omega$ according to data signals cause $\delta$ to vary. Thus one can use the atomic receiver to recover data bits from frequency-modulated signals by measuring the variations in $\Omega = \sqrt{\Omega_R^2 + \delta^2}$.

\subsubsection{Simultaneous down-conversion} The Rabi oscillation in \eqref{eq:RabiOsc} can double as a down-converter. Specifically, in the course of quantum jump, the symbol $s$ modulated on the electromagnetic wave is naturally down-converted to the Rabi frequency without any electronic mixer~\cite{RydMag_Liu2023}. 
Its integration with the preceding demodulation process represents another advantage of atomic receiver. 


\ifx\onecol\undefined
\begin{figure}
   \centering
   \includegraphics[width=3.2in]{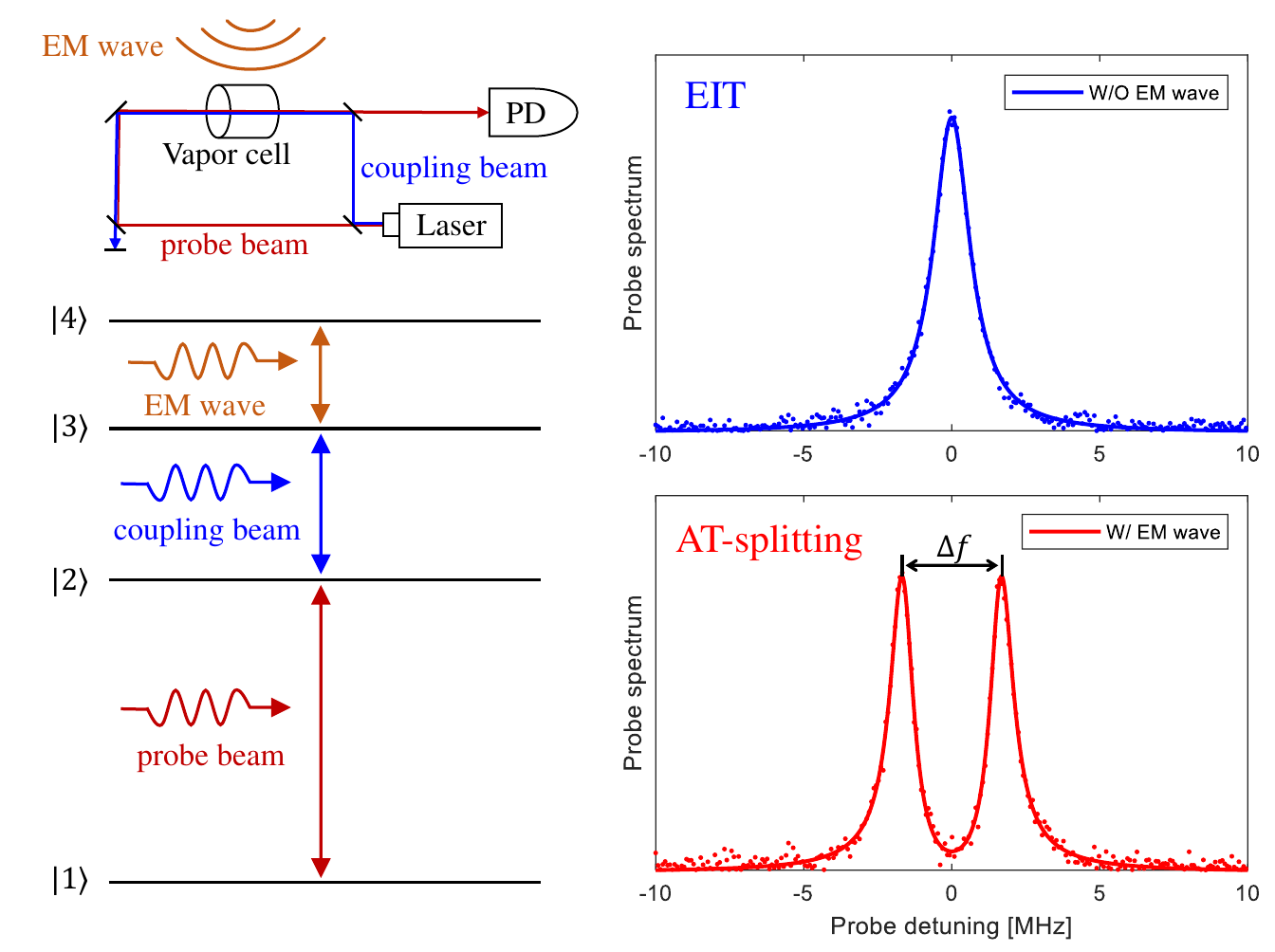}
   \vspace*{-1em}
   \caption{The EIT-AT phenomenon of a four-level system, where $\ket{3}$ and $\ket{4}$ represent the Rydberg states. For example, in literature~\cite{RydAMFM_Anderson2021}, $\ket{1}$, $\ket{2}$, $\ket{3}$, and $\ket{4}$ correspond to the energy levels $6S_{1/2}$, $6P_{3/2}$, $47S_{1/2}$, and $47P_{1/2}$ for measuring $37.4065\:{\rm GHz}$ electromagnetic wave.}
   \label{img:EIT}
   \vspace*{-1em}
\end{figure}
\else
\begin{figure}
   \centering
   \includegraphics[width=4in]{Figures/EIT-AT.pdf}
   \vspace*{-1em}
   \caption{  The EIT-AT phenomenon of a four-level system, where $\ket{3}$ and $\ket{4}$ represent the Rydberg states. For example, in literature~\cite{RydAMFM_Anderson2021}, $\ket{1}$, $\ket{2}$, $\ket{3}$, and $\ket{4}$ correspond to the energy levels $6S_{1/2}$, $6P_{3/2}$, $47S_{1/2}$, and $47P_{1/2}$ for measuring $37.4065\:{\rm GHz}$ electromagnetic wave. 
   		}
   \label{img:EIT}
   \vspace*{-1em}
\end{figure}
\fi

\subsection{Implementation of Atomic Receiver}\label{sec:2D}
Despite the elegant mathematical representation of Rabi oscillation in \eqref{eq:RabiOsc}, observing this phenomenon in practice is quite challenging due to various external disturbances that can destroy the coherence of a quantum state, such as spontaneous emission~\cite{QuantumOptics_Fox2006}. Fortunately, the data-carrying effective Rabi frequency is still measurable, as it can induce two other quantum phenomena, namely \emph{electromagnetically induced transparency} (EIT) and \emph{Autler-Townes} (AT) splitting~\cite{RydMag_Fancher2021, liao_microwave_2020}.
As shown in Fig.~\ref{img:EIT}, the EIT-AT phenomenon involves a four-level system denoted by quantum states $\ket{1}$, $\ket{2}$, $\ket{3}$, and $\ket{4}$, where $\ket{3}$ and $\ket{4}$ represent the Rydberg states we studied before. 
Recall that when preparing a Rydberg state, two laser beams, the probe beam and coupling beam, are needed to excite the electron from $\ket{1}$ to $\ket{2}$ and then from $\ket{2}$ to $\ket{3}$. 
In practice, a PD is deployed to receive the probe beam and illustrate its spectrum for observing the Rabi frequency. 
When the incident electromagnetic wave is turned off and the probe beam is turned resonant with the transition $\ket{1} \rightarrow \ket{2}$, the EIT phenomenon occurs such that the probe beam penetrates through the vapor cell nearly without absorption loss. 
What reflects on its spectrum is that one peak appears at the resonant frequency (see the upper sub-figure of Fig.~\ref{img:EIT}). 
On the other hand, when the incident electromagnetic wave is turned on, the two Rydberg states $\ket{3}$ and $\ket{4}$ are coupled by the Rabi frequency. 
As a result, the AT splitting phenomenon is induced, which splits the peak of the spectrum of the probe beam into two as depicted in the lower sub-figure of Fig.~\ref{img:EIT}. 
As physicists have demonstrated, the splitting interval $\Delta f$ of the two peaks is linearly proportional to the effective Rabi frequency~\cite{RydMag_Liu2023}:
\begin{align}
        \Delta f = \frac{\lambda_c}{\lambda_p} \frac{\Omega}{2\pi},
\end{align}
where $\lambda_c$ and $\lambda_p$ denote the wavelengths of the coupling and probe beams, respectively. Then, measuring $\Delta f$, the effective Rabi frequency, $\Omega = 2\pi\frac{\lambda_p}{\lambda_c
}\Delta f$, can be read out to infer the transmitted symbol. 


\section{Overview of Atomic MIMO Receiver}\label{sec:3}
In the preceding section, we introduced the principles of atomic SISO receiver. In this section, building on these principles, we propose the atomic MIMO receiver for multi-user communications. First, the transmission model of atomic MIMO receivers is presented. Then, the holographic-phase sensing measurement is incorporated into atomic MIMO receivers for detecting \emph{quadrature amplitude modulation} (QAM) symbols, followed by the model of quantum noise.


\subsection{Transmission Model of Atomic MIMO Receiver} \label{sec:3A}

\ifx\onecol\undefined
\begin{figure}
   \centering
   \includegraphics[width=3.2in]{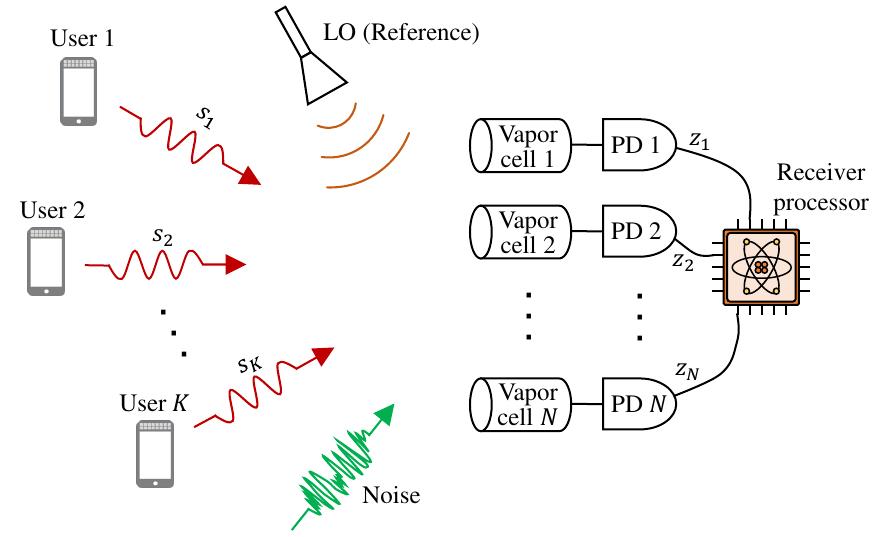}
   \caption{Schematic diagram of the atomic MIMO receiver. One atomic receiver employing $N$ atomic antennas is measuring signals from $K$-single antenna users together with one reference source.}\label{img:MIMO}
   \vspace*{-1em}
\end{figure}
\else
\begin{figure}
   \centering
   \includegraphics[width=3.5in]{Figures/MIMO.pdf}
   \vspace*{-1em}
   \caption{Schematic diagram of the atomic MIMO receiver. One atomic receiver employing $N$ atomic antennas is measuring signals from $K$-single antenna users together with one reference source.}\label{img:MIMO}
   \vspace*{-1em}
\end{figure}
\fi

Consider the multi-user communication system illustrated in Fig.~\ref{img:MIMO}, where $K$ single-antenna users communicate with one atomic MIMO receiver. Its architecture comprises $N$ vapor cells filled with Rydberg atoms and an equal number of PDs.  
Each vapor cell acts as an \emph{atomic antenna} for measuring electromagnetic waves, while the PDs read out the measurement results (i.e., the splitting interval, $\Delta_f$, and the Rabi frequency, $\Omega_R$) and aggregate them to the digital processor for signal detection. 
We assume all users share the same carrier frequency $\omega$, while all atoms are triggered to the uniform Rydberg states $\hbar \omega_g$ and $\hbar \omega_e$ (e.g., $52D_{5/2}$ and $53 P_{3/2}$) in response to the carrier frequency (e.g., $\omega = 5$~GHz). As a result, the resonance case with $\omega = \omega_{eg}$ is created. 

Consider an arbitrary user, say the $k$-th user. Its phase modulated symbol is denoted as $s_k = |s_k|e^{i\gamma_k} \in \mathcal{S}$, where the set $\mathcal{S}$ corresponds to the QAM constellation. 
The average power of $s_k$ is normalized as $\mathsf{E}(\|s_k\|^2) = 1$. 
Then, the transmit RF signal is given by $x_k(t) = \sqrt{P_k}|s_k|\cos(\omega t + \gamma_k)$. 
{\color{black} We adopt the practical multi-path wireless channel with $L$ 
paths.  The simultaneous transmissions from all users result in the received 
electromagnetic wave at the $n$-th vapor cell:  
\begin{align}\label{eq:Ent}
    \mb{E}_n(t) = \sum_{k = 1}^K\sum_{l = 1}^L \boldsymbol{\epsilon}_{nkl} \sqrt{P_k} \rho_{nkl} |s_k| \cos (\omega t + \phi_{nkl} + \gamma_k),
\end{align}
where variables $\boldsymbol{\epsilon}_{nkl}$, $\rho_{nkl}$, and $\phi_{nkl}$ represent the polarization, path loss, and phase shift of the propagation from the $k$-th user to the $n$-th antenna via the $l$-th path.}

Next, consider an arbitrary vapor cell, say the $n$-th cell at the receiver. For each atom inside this cell,  its quantum state can be expressed as $\ket{\psi(t)}_n = \alpha_{g,n}(t) \ket{g} + \alpha_{e,n}(t) \ket{e} = [\alpha_{e,n}(t), \alpha_{g,n}(t)]^T$. Similarly as in the SISO case, the evolution of $\ket{\psi(t)}_n$ is governed by the Schrödinger equation
\begin{align}\label{eq:scheqn}
    i\hbar \frac{\partial \ket{\psi(t)}_n}{\partial t} = (\hat{H}_n + \hat{V}_n)\ket{\psi(t)}_n.
\end{align} 
As all atoms are exited to the uniform Rydberg states, the free Hamiltonian $\hat{H}_n $ is the same as \eqref{eq:FreeHamiltonian}, i.e., $\hat{H}_n = \diag\{\hbar\omega_e, \hbar\omega_g\}$. 
Similar to the discussion in \eqref{eq:InterHamiltonian}, the interaction Hamiltonian $\hat{V}_n$ results from the coupling of the incident wave $\mb{E}_n(t)$ and the position operator $\hat{\mb{r}}_n$, i.e., $\hat{V}_n = q\hat{\mb{r}}_n^T\mb{E}_n(t) = \sum_{k = 1}^K\sum_{l = 1}^L(q\hat{\mb{r}}_n^T\boldsymbol{\epsilon}_{nkl}) \sqrt{P_k}\rho_{nkl} |s_k| \cos (\omega t + \phi_{nkl} + \gamma_k)$.
By expanding the position operator $\hat{\mb{r}}_n$ on the basis $\{\ket{g}, \ket{e}\}$ as $\hat{\mb{r}}_n = \bra{e}\hat{\mb{r}}_n\ket{g}\ket{e}\bra{g} + \bra{g}\hat{\mb{r}}_n\ket{e}\ket{g}\bra{e}$, we arrive at the matrix-form expression: 
\begin{align}
\hat{V}_n = \xi_n(t)\ket{e} \bra{g} + \xi_n^*(t)\ket{g} \bra{e},
\end{align}
where 
\begin{align}\label{eq:xin}\color{black}
\xi_{n}(t) {=} \sum_{k = 1}^K\sum_{l = 1}^L{\boldsymbol{\mu}}_{eg}^T\boldsymbol{\epsilon}_{nkl}  \sqrt{P_k}\rho_{nkl} |s_k| \cos (\omega t + \phi_{nkl} + \gamma_k). 
\end{align}
We assume that the initial state is $\alpha_{e,n} = 0$. Then, we can derive the
probability amplitude using RWA (see details in Appendix A) as
\begin{align} \label{eq:RabiMIMO}
    |\alpha_{e,n}(t)|^2 
    = \sin^2\left(\frac{\Omega_{R,n}}{2} t\right),
\end{align}
where the Rabi frequency is given as \begin{align}\label{eq:Sch2}\color{black}
    \Omega_{R,n} = \left|\sum_{k=1}^{K} \sum_{l = 1}^L\frac{1}{\hbar} ({\boldsymbol{\mu}}_{eg}^H\boldsymbol{\epsilon}_{nkl}) \sqrt{P_k}\rho_{nkl} |s_k| e^{i(\phi_{nkl} + \gamma_k)}\right|.
\end{align} 
Note that the detuning factor, $\delta$, vanishes since $\delta = \omega_{eg} - \omega = 0$. The preceding formula characterizes the quantum-jump behavior of the $n$-th atomic antenna as triggered by $K$ sources, where multi-user  symbols, $\{s_k\}$, 
are encoded in the Rabi frequencies, $\{\Omega_{R,n}\}$. 

The goal of atomic-MIMO receiver is to read the values of $\{\Omega_{R,n}\}$ and infer $\{s_k\}$ from $\{\Omega_{R,n}\}$.
To this end, according to the principle of EIT-AT discussed in Section \ref{sec:2D}, the Rabi frequency, $\Omega_{R,n}$, can be measured by the splitting interval, $\Delta f_n$, of the probe-beam spectrum since $
    \Omega_{R,n} = 2\pi\frac{\lambda_p}{\lambda_c}\Delta f_n$. 
   For ease of exposition, let the measurement of $\Omega_{R,n}$ be denoted as $z_n$, which is the equivalent received signal at the $n$-th antenna.  
To facilitate the detector design, a compact matrix form of $\{z_n\}$ is 
derived as follows. We define $\mb{s} = [s_1, s_2, \cdots, s_K]^T$ and 
$\mb{a}_n = [a_{n,1}, a_{n,2}, \cdots, a_{n,K}]^T$, {\color{black} where 
$a_{n,k} = \sum_{l = 1}^L\frac{1}{\hbar} 
({\boldsymbol{\mu}}_{eg}^T\boldsymbol{\epsilon}_{n,k})  \sqrt{P_k}\rho_{nkl} 
e^{-i\phi_{nkl}}$}. It is straightforward to obtain that $z_{n} = \left| 
\mb{a}_n^H \mb{s} \right|$.
We further define the received vector-form signal as $\mb{z} = [z_1, z_2, \cdots, z_N]^T$ and the observation matrix as $\mb{A} = [\mb{a}_1, \mb{a}_2, \cdots, \mb{a}_N] \in \mathbb{C}^{K \times N}$. It follows that 
\begin{align}\label{eq:z=|As|}
    \mb{z} = |\mb{A}^H\mb{s}|,
\end{align}
where the operator $|\mb{x}|$ takes the magnitude of each element of the vector, $\mb{x}$.
A crucial conclusion is drawn from \eqref{eq:z=|As|} that the transmit symbols in $\mb{s}$ are embedded in the observed amplitude vector $\mb{z} = |\mb{A}^H\mb{s}|$, where the observation matrix $\mb{A}$ can be treated as an \emph{effective MIMO channel}. 
As opposed to the linear model adopted in conventional MIMO communications, the atomic MIMO receiver features a non-linear model based on amplitude detection. This is equivalent to the well-known PR model in the realm of optical imaging~\cite{PR_Dong2023}. 
Before delving into this PR problem, two factors, the holographic-phase sensing measurement and the noise model, that are part of the proposed framework and affect the receiver's effectiveness should be discussed.

\subsection{Holographic-Phase Sensing Measurement}
One can observe from \eqref{eq:z=|As|} that the absolute phase $e^{i\angle\mb{s}}$ of $\mb{s}$ is ambiguous to the received signal $\mb{z}$, making it hard to detect QAM symbols. To address this issue, a measurement scheme called \emph{holographic phase-sensing} is often employed~\cite{RydPhase_And2020}. As illustrated in Fig.~\ref{img:MIMO}, this scheme artificially introduces a LO to send known reference signals, denoted as $s_b = |s_b|e^{j\gamma_b}$, to interfere with signals from users. The LO is typically implemented by a horn antenna and can be deployed close to the receiver. 
Utilizing the reference signal $s_b$, one can infer the absolute phase, $e^{i\angle\mb{s}}$, by estimating the phase difference between $\mb{s}$ and $s_b$, thereby enabling QAM detection.  

Mathematically speaking, let the reference signal share the same carrier frequency $\omega$ with the user signal.
The transmit RF signal by LO is expressed as $x_b(t) = \sqrt{P_b}|s_b|\cos(\omega t + \gamma_b)$. Since the LO-to-receiver distance is quite short, e.g., tens of centimeters, the LO-to-receiver channel is dominated by the line-of-sight (LoS) path. Thereby, the received electromagnetic wave in \eqref{eq:Ent} should be modified as 
\begin{align}\label{eq:Ent2}
    \mb{E}_n(t) = \sum_{k = 1}^K\sum_{l = 1}^L \boldsymbol{\epsilon}_{nkl} \sqrt{P_k} \rho_{nkl} |s_k| \cos (\omega t + \phi_{nkl} + \gamma_k) \notag \\
    + \boldsymbol{\epsilon}_{b,n} \sqrt{P_b} \rho_{b,n} |s_b| \cos (\omega t + \phi_{b,n} + \gamma_b),
\end{align}
where $\rho_{b,n}$, $\phi_{b,n}$, and $\boldsymbol{\epsilon}_{b,n}$ refer to the path loss, phase shift, and polarization of the LO-to-receiver LoS path, respectively. Then, following the same steps as \eqref{eq:scheqn} to \eqref{eq:z=|As|}, we can arrive at the modified received signal: 
\begin{align}\label{eq:BPR}
    \mb{z} = |\mb{A}^H\mb{s} + \mb{b}|,
\end{align}
where $\mb{b}=[b_1,\cdots,b_N]^T$ and $b_{n} = \frac{s_b}{\hbar} {\boldsymbol{\mu}}_{eg}^H\boldsymbol{\epsilon}_{b,n} \sqrt{P_b}\rho_{b,n} e^{i\phi_{b,n}}$. 
Due to the presence of reference signal $\mb{b}$, \eqref{eq:BPR} is called a \emph{biased PR model} as opposed to the typical PR model without this reference signal. 

{\color{black}
\subsection{Noise Model}
The noise perturbing the atomic MIMO receiver is attributed to both external and internal sources. 

External noise is mainly induced by the signal interference from adjacent time slots, bands, and nearby cells, which is clearly environment-dependent. We use the notation $\sigma_E^2$ to represent the total variance of external noise. 

The internal noise comes from the thermal noise of electronic components and the randomness of quantum measurement, i.e., \emph{quantum shot noise} (QSN). 
In contrast to conventional RF systems, an atomic receiver makes no use of RF circuits at the front end that produce thermal noises, such as analog filters, amplifiers, and mixers. This fact means that an atomic receiver is immune to thermal noise. Thereafter, the limiting noise source for atomic receivers is the QSN, whose level in the unit of electric field strength [V/m] is presented by \cite{RydMag_Fancher2021}
\begin{align}
    E_{\min} = \frac{h}{\|\boldsymbol{\mu}_{eg}\|(T_iT_rN_a)^{1/2}},
\end{align}
where $T_i$, $T_r$, and $N_a$ denote the measurement time, the relaxation time of Rydberg states, and the number of participating atoms, respectively. 
By plugging this noise limit into the electric field and further transforming it into the unit of Rabi frequency [Rad/s],
$E_{\min}$ is normalized to $\frac{\|\boldsymbol{\mu}_{eg}\|} {\hbar} E_{\min} = \frac{2\pi}{(T_tT_rN_a)^{1/2}}$. As a result, the variance of QSN is given as
\begin{align}\label{eq:sigmas}
    \sigma_S^2 = \frac{4\pi^2}{T_iT_rN_a}. 
\end{align}
The extreme sensitivity of atomic receivers arises because the noise level \eqref{eq:sigmas} can be orders of magnitude lower than the thermal noise of conventional receivers \cite{QuanSense_Zhang2023}. 
To fairly compare the quantum shot noise and thermal noise, we adopt the standard $T_i = 1\:{\rm s}$ measurement time for both them. In this case, the power density of room-temperature thermal noise is $-176\:{\rm dBm}$~\cite{QSN_Bussey2022}. As for the quantum noise, we adopt the typical parameters $T_r = 10\:\mu{\rm s}$ and $N_a = 5\times 10^5$, which give rise to a noise level $\frac{2\pi}{(T_iT_rN_a)^{1/2}} = 2.81\:{\rm Hz}$. According to transformation between electric field and power~\cite{QSN_Bussey2022}, this noise refers to a power density of $-191\:{\rm dBm}$ when $\omega = 2\pi\times 5\:{\rm GHz}$, which is 15 dB smaller than the room-temperature thermal noise level. This reflects the high sensitivity of atomic receivers~\cite{QuanSense_Zhang2023}.

In summary, the total variance of noise is expressed as 
\begin{align}
    \sigma^2 = \sigma_S^2 + \sigma_E^2 = \frac{4\pi^2}{T_iT_rN_a} + \sigma_E^2. 
\end{align}
We invoke the law-of-large-number to model the noise as an additive Gaussian random vector, $\mb{w} \sim \mathcal{CN}(0, \sigma^2 \mb{I})$.
By adding the noise into \eqref{eq:Ent2}, we finally arrive at the following input-output relationship for atomic MIMO systems: 
\begin{align}\label{eq:BiasedPR}
    \mb{z} = |\mb{A}^H\mb{s} + \mb{b} + \mb{w}|.
\end{align}
}

\section{Atomic-MIMO Signal Detection}\label{sec:4}
In this section, we first formulate the signal detection problems of atomic MIMO receivers based on the LS and ML principles. Then, two atomic-MIMO detection algorithms are proposed to solve the formulated problems. Their performance and complexity are subsequently analyzed.

\subsection{Atomic-MIMO Signal Detection Problems}\label{sec:3B}
The objective of atomic-MIMO signal detection is to estimate the multi-user data vector, $\mb{s}$, from the received signal, $\mb{z}$, in \eqref{eq:BiasedPR}.
Two problem formulations are provided below.
\subsubsection{Least-Square Criterion} The first formulation follows the typical LS criterion:
\begin{align}\label{eq:LS}
    \min_{\mb{s} \in \mathcal{S}^K} \left\| \mb{z} - |\mb{A}^H\mb{s} + \mb{b}| \right\|_2^2.
\end{align}

\subsubsection{Maximum-Likelihood Criterion} 
It can be proven that each entry of $\mb{z}$ follows the Rician distribution, i.e.,
\begin{align} \label{eq:Rician}
    p(z_n; \mb{s}) = \frac{2z_n}{\sigma^2}\exp\left(- \frac{z_n^2 + |\lambda_n|^2}{\sigma^2} \right) I_0 \left(\kappa_n\right), 
\end{align}
where $\lambda_n \overset{\Delta}{=} \mb{a}_n^H\mb{s} + b_n$, $\kappa_n \overset{\Delta}{=}\frac{2z_n|\mb{a}_n^H\mb{s} + b_n|}{\sigma^2}$, and
$I_0(x) = \frac{1}{2\pi} \int_0^{2\pi} e^{x\cos\theta}\text{d}\theta$ is the modified zero-order Bessel function. 
Thereby, for any given realization of the effective channel and reference, the ML detection of $\mb{s}$
is formulated as 
\begin{align}\label{eq:ML}
    \max_{\mb{s} \in \mathcal{S}^K} \sum_{n = 1}^N \log p(z_n; \mb{s}) \propto \sum_{n = 1}^N  - \frac{|\lambda_n|^2}{\sigma^2} + \log I_0 \left(\kappa_n \right).
\end{align}

Although problems \eqref{eq:LS} and \eqref{eq:ML} can be optimally solved by an exhaustive search, the complexity is prohibitive when the users are many and/or the constellation size is large. To this end, two low-complexity algorithms will be designed subsequently.
\begin{algorithm}[tb]
	\caption{$\!\!$: Biased GS Algorithm}
 \label{alg1}
	\begin{algorithmic}[1]
		\REQUIRE ~ 
		The equivalent received signal $\mb{z}$, 
  the equivalent channel matrix $\mb{A}$, the reference field $\mb{b}$, the total number of iterations $t_0$.
		\STATE Construct the augmented observation matrix $\bar{\mb{A}} = [\mb{A}^H, \mb{b}]^H = [\bar{\mb{a}}_1,\bar{\mb{a}}_2,\cdots \bar{\mb{a}}_N]$ \\
  \STATE Find the principal  eigenvector $\mb{v}$ of $\mb{M} = \sum_{n=1}^N z_n \bar{\mb{a}}_n\bar{\mb{a}}_n^H$
  \STATE Set $\bar{r} = \frac{ |\mb{v}^H\bar{\mb{A}}| \mb{z}}{\|\bar{\mb{A}}^H\mb{v}\|_2^2}$ and $\bar{\mb{s}}^0 = \bar{r} \mb{v}$ 
  \\
  \STATE Initialize $\mb{s}^0 =  e^{-i\angle\bar{\mb{s}}^0(K + 1)}\bar{\mb{s}}^0(1:K) $
  \\
  \FOR{$t = \{1,2,\cdots,t_0\}$}
    \STATE Update $\boldsymbol{\theta}^t = \angle(\mb{A}^H\mb{s}^{t-1} + \mb{b})$
    \STATE Update $\mb{s}^t = (\mb{A}\mb{A}^H)^{-1}\mb{A}(\mb{z}\circ e^{i\boldsymbol{\theta}^t} - \mb{b})$
  \ENDFOR
  \ENSURE ~ The estimated symbol $\tilde{\mb{s}} = \mb{s}^{t_0}$\\
	\end{algorithmic}
\end{algorithm}

\subsection{Atomic-MIMO Signal Detection Based on LS Criterion}\label{sec:4A}
Consider the biased PR problem in \eqref{eq:LS} corresponding to LS detection. 
One of the most popular PR solvers is the Gerchberg-Saxton (GS) algorithm~\cite{PR_Netrapalli2015}. We transform the classical GS algorithm to a \emph{biased GS algorithm} to solve the biased PR problem in \eqref{eq:LS} . The result is summarized in Algorithm \ref{alg1} and elaborated as follows.

Define $\mb{y} = [y_1, y_2,\cdots,y_N]^T = \mb{A}^H\mb{s} + \mb{b} + \mb{w}$. The fundamental idea of the GS algorithm is that if the true phase $e^{i\boldsymbol{\theta}}$ of $\mb{y}$ is accessible, where  $e^{i\boldsymbol{\theta}}= [e^{i\angle{y_1}},e^{i\angle {y_2}},\cdots,e^{i\angle {y_N}}]^T$, then problem \eqref{eq:LS} becomes a linear regression problem: $\min_{\mb{s}}\|\mb{z}\circ e^{i\boldsymbol{\theta}} - \mb{A}^H\mb{s} -  \mb{b}\|$, whose solution is $\mb{s} = (\mb{A}\mb{A}^H)^{-1}\mb{A}(\mb{z}\circ e^{i\boldsymbol{\theta}} - \mb{b})$, where $\circ$ denotes the Hadamard product.  As the knowledge on $e^{i\boldsymbol{\theta}}$ is unavailable, an alternative approach is to jointly optimize $e^{i\boldsymbol{\theta}}$ and $\mb{s}$:
\begin{align}\label{eq:GS}
    \min_{\boldsymbol{\theta}, \mb{s} \in \mathcal{S}^K} \left\| \mb{z}\circ e^{i\boldsymbol{\theta}} - \mb{A}^H\mb{s} - \mb{b} \right\|_2^2.
\end{align}

Problem \eqref{eq:GS} is non-convex since $e^{i\boldsymbol{\theta}}$ is an oscillating function with respect to (w.r.t) $\boldsymbol{\theta}$. Fortunately, this optimization problem can be efficiently solved by the well-known alternating minimization methodology. To be specific, $\boldsymbol{\theta}$ and $\mb{s}$ are iteratively updated as presented in steps 6-7 of Algorithm \ref{alg1}. 
During the $t$-th iteration, given the previous estimate $\mb{s}^{t-1}$, $\boldsymbol{\theta}$ can be  optimized by extracting the phase of $\mb{A}^H\mb{s}^{t-1} + \mb{b}$, that is $
    \boldsymbol{\theta}^t = \angle(\mb{A}^H\mb{s}^{t-1} + \mb{b})$.
Furthermore, given $\boldsymbol{\theta}^{t}$, the optimal $\mb{s}$ is determined by the LS rule:
$
    \mb{s}^t = (\mb{A}\mb{A}^H)^{-1}\mb{A}(\mb{z}\circ e^{i\boldsymbol{\theta}^{t}} - \mb{b})
$.

One issue in the implementation of biased GS algorithm is its sensitivity to the choice of initial point. Without a proper initial estimate, Algorithm \ref{alg1} may converge to a shallow local optimum. To address this issue, the computationally efficient spectral method has been developed to yield good initialization~\cite{PR_Candes2015}. Specifically, to adapt the spectral method into our model, we first construct an augmented observation matrix $\bar{\mb{A}} = [\mb{A}^H, \mb{b}]^H $ as presented in step 1. This matrix allows us to treat the user signal and reference signal together: $\bar{\mb{s}} = [\mb{s}^H, 1]^H$. Thereafter the observed amplitude vector can be rewritten in a reference-free form: $\mb{z} = |\mb{A}^H\mb{s} + \mb{b} + \mb{w}| = |\bar{\mb{A}}^H\bar{\mb{s}} +  \mb{w}|$, whose initialization can be performed using the typical spectral method. To elaborate, define the weighted covariance matrix as $\mb{M} = \sum_{n=1}^N z_n \bar{\mb{a}}_n\bar{\mb{a}}_n^H$, where $\bar{\mb{a}}_n$ denotes the $n$-th column of $\bar{\mb{A}}$. The direction of the initial value $\bar{\mb{s}}^0$ of $\bar{\mb{s}}$ is estimated as the principal eigenvector $\mb{v}$ of $\mb{M}$. In addition, the magnitude $\bar{r}$ of $\bar{\mb{s}}^0$ is obtained by approaching $|\mb{A}^H\bar{\mb{s}}^0| = |\mb{A}^H\bar{r} \mb{v}|$ to $\mb{z}$:
\begin{align}
    \bar{r} = \arg\min_{r'} \| {r'}|\mb{A}^H\mb{v}| - \mb{z}\|_2^2 =  \frac{ |\mb{v}^H\bar{\mb{A}}| \mb{z}}{\|\bar{\mb{A}}^H\mb{v}\|_2^2}.
\end{align}
Thereafter, $\bar{\mb{s}}$ is initialized as $\bar{\mb{s}}^0 = \bar{r}\mb{v}$. Recall that only the first $K$ entries of $\bar{\mb{s}}$ are the wanted symbols from users, while the last entry of $\bar{\mb{s}}$ refers to the known value 1 with a zero phase. Hence,  $\mb{s}^0$ is initialized by forcing the phase of the last entry of $\bar{\mb{s}}^0$ to zero:
\begin{align}
    \mb{s}^0 =  e^{-i\angle\bar{\mb{s}}^0(K + 1)}\bar{\mb{s}}^0(1:K),
\end{align}
which completes the initialization. After carrying out the initialization and $t_0$-step iterations, the final estimate is obtained as $\tilde{\mb{s}} = \mb{s}^{t_0}$. Last, one can project each entry of $\tilde{\mb{s}}$ to the nearest point of the QAM constellation to perform demapping, i.e., $\hat{s}_k = \arg\min_{s\in \mathcal{S}} \|\tilde{s}_k - s\|$.

The main limitation of the LS based biased GS algorithm arises as it fails to consider the distribution of the received signal, leaving margin for performance improvement. The limitation can be overcome by the ML algorithm in the sequel. 


\begin{algorithm}[tb]
	\caption{$\!\!$:  EM-GS Algorithm}
	\label{alg2}
	\begin{algorithmic}[1]
		\REQUIRE ~ 
		The equivalent received signal $\mb{z}$, the equivalent channel matrix $\mb{A}$, the reference field $\mb{b}$, the power density of noise $\sigma$, the number of iterations $t_0$.
		\STATE Construct the augmented observation matrix $\bar{\mb{A}} = [\mb{A}^H, \mb{b}]^H = [\bar{\mb{a}}_1,\bar{\mb{a}}_2,\cdots \bar{\mb{a}}_N]$ \\
  \STATE Find the principal  eigenvector $\mb{v}$ of $\mb{M} = \sum_{n=1}^N z_n \bar{\mb{a}}_n\bar{\mb{a}}_n^H$
  \STATE Set $\bar{r} = \frac{ |\mb{v}^H\bar{\mb{A}}| \mb{z}}{\|\bar{\mb{A}}^H\mb{v}\|_2^2}$ and $\bar{\mb{s}}^0 = \bar{r} \mb{v}$ 
  \\
  \STATE Initialize $\mb{s}^0 =  e^{-i\angle\bar{\mb{s}}^0(K + 1)} \bar{\mb{s}}^0(1:K) $
  \\
  \FOR{$t = \{1,2,\cdots,t_0\}$}
    \STATE Update $\boldsymbol{\theta}^t = \angle(\mb{A}^H\mb{s}^{t-1} + \mb{b})$
    \STATE Update $\boldsymbol{\kappa}^t = \frac{2}{\sigma^2}\mb{z}\circ|\mb{A}^H\mb{s}^{t - 1} + \mb{b}|$
    \STATE Update $\mb{s}^t = (\mb{A}\mb{A}^H)^{-1}\mb{A}(\mb{z}\circ e^{i\boldsymbol{\theta}^t} \circ R(\boldsymbol{\kappa}^t) - \mb{b})$
  \ENDFOR
  \ENSURE ~ The estimate symbol $\tilde{\mb{s}} = \mb{s}^{t_0}$\\
	\end{algorithmic}
\end{algorithm}

\subsection{Atomic-MIMO Signal Detection Based on ML Criterion}\label{sec:4B}
Consider the ML detection problem in \eqref{eq:ML}. We propose to solve it by integrating the Expectation-Maximization (EM) and aforementioned GS algorithm, termed the EM-GS algorithm. 
One can observe from \eqref{eq:ML} that the occurrence of the non-trivial 
modified Bessel function $I_0(\cdot)$ makes the ML detection intractable. To overcome this difficulty, the EM method is employed to create a series of tractable lower-bound surrogate functions to the primal objective in \eqref{eq:ML}, resulting in sequentially updated estimates of $\mb{s}^t$ for $t = 1,2,\cdots, t_0$  as summarized in Algorithm \ref{alg2}. The detailed EM-GS algorithm is designed as follows.

To begin with, recall the definition $\mb{y} = \mb{A}^H\mb{s} + \mb{b} + \mb{w} = \mb{z} \circ e^{i\boldsymbol{\theta}}$. 
Like the basic idea of the biased GS algorithm, the phase $e^{i\boldsymbol{\theta}}$ is an unobserved but  crucial variable, the information of which can substantially simplify our problem.  This observation motivates us to treat the phase $e^{i\boldsymbol{\theta}}$ as a latent variable in our EM algorithm. 
Based on the Jensen inequality, the EM algorithm states that the loglikelihood function is lower-bounded by the following surrogate function:
\ifx\onecol\undefined
\begin{align}\label{eq:surrogate}
    \sum_{n = 1}^N \log p(z_n; \mb{s}) &\ge \sum_{n=1}^N \mathsf{E}_{p(\theta_n|z_n, \mb{s}^{t - 1})}\log p(z_n, \theta_n; \mb{s})\notag\\
    &\overset{\Delta}{=} Q(\mb{s}|\mb{s}^{t-1}), 
\end{align}
\else 
\begin{align}\label{eq:surrogate}
    \sum_{n = 1}^N \log p(z_n; \mb{s}) &\ge \sum_{n=1}^N \mathsf{E}_{p(\theta_n|z_n, \mb{s}^{t - 1})}\log p(z_n, \theta_n; \mb{s})\overset{\Delta}{=} Q(\mb{s}|\mb{s}^{t-1}), 
\end{align}
\fi
where $\mb{s}^{t-1}$ denotes the previous estimate of $\mb{s}$. The EM algorithm proceeds iteratively by two steps to maximize the loglikelihood. They are the Expectation step (E-step) to derive the surrogate function $Q(\mb{s}|\mb{s}^{t-1})$ by calculating the expectation in \eqref{eq:surrogate}, and the Maximization step (M-step) to update $\mb{s}^{t}$ by maximizing $Q(\mb{s}|\mb{s}^{t-1})$, i.e., $\mb{s}^{t} = \arg\max_{\mb{s}}Q(\mb{s}|\mb{s}^{t-1})$. These two steps are elaborated below.

\begin{itemize}
\item \textbf{E-step}:
In \eqref{eq:surrogate}, the likelihood function $p(z_n, \theta_n; \mb{s})$ stands for the ML estimator to $\mb{s}$ when both the amplitude $z_n$ and phase $\theta_n$ are accessible. Clearly, $p(z_n, \theta_n; \mb{s})$ is a Gaussian distribution 
\begin{align}\label{eq:Gaussian}
    p(z_n, \theta_n; \mb{s}) &= \frac{1}{\pi\sigma^2} \exp \left(-\frac{1}{\sigma^2}\left|z_ne^{i\theta_n}- \lambda_n\right| \right),
\end{align}
where $\lambda_n \overset{\Delta}{=} \mb{a}_n^H\mb{s} + b_n$.
As for the posterior probability $p(\theta_n|z_n, \mb{s}^{t-1})$, it refers to our current knowledge to the phase $\theta_n$ given previous estimate $\mb{s}^{t-1}$. As shown in~\cite{Mises_Gattoa2007}, $p(\theta_n|z_n, \mb{s}^{t-1})$ follows the von Mises distribution:
\begin{align}\label{eq:von Mises}
    p(\theta_n|z_n, \mb{s}^{t-1}) = \frac{\exp\left(\kappa_n^t \cos(\theta_n - \theta_n^t)\right)}{2\pi I_0(\kappa_n^t)},
\end{align}
where $\theta_n^t \overset{\Delta}{=} \angle\left(\mb{a}_n^H\mb{s}^{t-1} + b_n\right)$ and $\kappa_n^t \overset{\Delta}{=} \frac{2z_n|\mb{a}_n^H\mb{s}^{t-1} + b_n|}{\sigma^2}$. Based on \eqref{eq:Gaussian} and \eqref{eq:von Mises}, we derive the following result that defines the E-step. 
\begin{lemma}\label{lemma1}
\emph{The surrogate function is expressed as
  \begin{align}\label{eq:LR}
     Q(\mb{s}|\mb{s}^{t-1}) =   - \frac{1}{\sigma^2}\| \mb{z}\circ e^{i\boldsymbol{\theta}^t} \circ R(\boldsymbol{\kappa}^t) - \mb{A}^H\mb{s} - \mb{b} \|_2^2 + C,
  \end{align}
  where   $e^{i\boldsymbol{\theta}^t} = [e^{i\theta_1^t}, e^{i\theta_2^t},\cdots, e^{i\theta_N^t}]^T = e^{i\angle(\mb{A}^H\mb{s}^{t-1} + \mb{b})}$, $\boldsymbol{\kappa}^t = [\kappa_1^t, \kappa_2^t,\cdots, \kappa_N^t]^T=\frac{2}{\sigma^2}\mb{z}\circ|\mb{A}^H\mb{s}^{t - 1} + \mb{b}|$, $R(x) \overset{\Delta}{=} \frac{I_1(x)}{I_0(x)}$, $I_1(x)$ represents the first-order modified Bessel function, and $C$ denotes a constant irrelevant to $\mb{s}$.}
\end{lemma}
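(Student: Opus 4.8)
The plan is to evaluate the expectation defining $Q(\mb{s}|\mb{s}^{t-1})$ in \eqref{eq:surrogate} termwise by inserting the Gaussian joint density \eqref{eq:Gaussian} and the von Mises posterior \eqref{eq:von Mises}, and then to recognize the $\mb{s}$-dependent part as a perfect square. First I would take logarithms in \eqref{eq:Gaussian} to get $\log p(z_n,\theta_n;\mb{s}) = -\log(\pi\sigma^2) - \frac{1}{\sigma^2}|z_n e^{i\theta_n}-\lambda_n|^2$ with $\lambda_n = \mb{a}_n^H\mb{s}+b_n$, and expand the squared modulus as $|z_n e^{i\theta_n}-\lambda_n|^2 = z_n^2 - 2z_n\operatorname{Re}(e^{-i\theta_n}\lambda_n) + |\lambda_n|^2$. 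Since $z_n$ is observed and $\lambda_n$ is deterministic given $\mb{s}$, only the cross term carries the latent phase, so the E-step collapses to computing the single quantity $\mathsf{E}_{p(\theta_n|z_n,\mb{s}^{t-1})}[e^{-i\theta_n}]$.

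The key computation is the first trigonometric moment of the von Mises law. Using the integral representations $I_0(\kappa)=\frac{1}{2\pi}\int_0^{2\pi}e^{\kappa\cos\theta}\,d\theta$ and $I_1(\kappa)=\frac{1}{2\pi}\int_0^{2\pi}\cos\theta\, e^{\kappa\cos\theta}\,d\theta$, I would substitute $\theta_n\mapsto\theta_n-\theta_n^t$; the imaginary (sine) part vanishes by odd symmetry over the period, while the cosine part reproduces $I_1(\kappa_n^t)$. This yields $\mathsf{E}[e^{i\theta_n}]=R(\kappa_n^t)e^{i\theta_n^t}$ with $R(x)=I_1(x)/I_0(x)$, and hence, taking conjugates, $\mathsf{E}[\operatorname{Re}(e^{-i\theta_n}\lambda_n)] = R(\kappa_n^t)\operatorname{Re}(e^{-i\theta_n^t}\lambda_n)$.

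It then remains to complete the square. Setting $u_n \overset{\Delta}{=} z_n R(\kappa_n^t)e^{i\theta_n^t}$, the $n$-th entry of $\mb{z}\circ e^{i\boldsymbol{\theta}^t}\circ R(\boldsymbol{\kappa}^t)$, I would observe that $|u_n-\lambda_n|^2 = z_n^2 R(\kappa_n^t)^2 - 2z_n R(\kappa_n^t)\operatorname{Re}(e^{-i\theta_n^t}\lambda_n) + |\lambda_n|^2$ has exactly the same $\mb{s}$-dependence (carried entirely by $\lambda_n$) as $\mathsf{E}[|z_n e^{i\theta_n}-\lambda_n|^2]$; the two differ only by $z_n^2\bigl(1-R(\kappa_n^t)^2\bigr)$, which involves $z_n$ and $\mb{s}^{t-1}$ but not the current variable $\mb{s}$. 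Summing over $n$ and absorbing all such $\mb{s}$-free terms — the $-\log(\pi\sigma^2)$ offsets together with the $z_n^2(1-R(\kappa_n^t)^2)/\sigma^2$ residues — into the constant $C$ delivers $Q(\mb{s}|\mb{s}^{t-1}) = -\frac{1}{\sigma^2}\|\mb{z}\circ e^{i\boldsymbol{\theta}^t}\circ R(\boldsymbol{\kappa}^t) - \mb{A}^H\mb{s}-\mb{b}\|_2^2 + C$, where I use $(\mb{A}^H\mb{s})_n=\mb{a}_n^H\mb{s}$ to assemble the vectorized norm.

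I expect the main obstacle to be the von Mises moment evaluation and, more subtly, the bookkeeping that confirms the leftover $z_n^2(1-R(\kappa_n^t)^2)$ term is genuinely independent of the current $\mb{s}$, so that it may be legitimately folded into $C$. This independence is precisely what turns the surrogate into a clean (biased) linear-regression objective whose maximizer in the M-step has the closed form appearing in Step~8 of Algorithm~\ref{alg2}.
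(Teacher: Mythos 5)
Your proposal is correct and takes essentially the same route as the paper's Appendix~B: your complex first circular moment $\mathsf{E}[e^{-i\theta_n}] = R(\kappa_n^t)e^{-i\theta_n^t}$ is exactly the paper's real integral $L_n = R(\kappa_n^t)\cos(\theta_n^t - \angle\lambda_n)$, obtained there by the same shift $\theta_n \mapsto \theta_n - \theta_n^t$, the same odd-symmetry cancellation of the sine term, and the same integral representation of $I_1$. Both arguments then complete the square in $\lambda_n$ and absorb the $\mb{s}$-independent residue $z_n^2\bigl(1-R^2(\kappa_n^t)\bigr)$ into the constant $C$, so the two proofs coincide up to real-versus-complex bookkeeping.
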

\begin{proof}
    (See Appendix B). 
\end{proof}


\item \textbf{M-step}:
 Lemma \ref{lemma1} reveals that the maximization of $Q(\mb{s}|\mb{s}^{t-1})$ is equivalent to the minimization of $\| \mb{z}\circ e^{i\boldsymbol{\theta}^t} \circ R(\boldsymbol{\kappa}^t) - \mb{A}^H\mb{s} - \mb{b} \|_2^2$. In this context, the primal non-convex optimization problem in \eqref{eq:ML} is converted to a sequence of easily handled linear regression problems. Therefore, as presented in steps 6-8 of Algorithm \ref{alg2}, the estimate of $\mb{s}$ is updated successively by 
\begin{align}
    \mb{s}^t = (\mb{A}\mb{A}^H)^{-1}\mb{A}(\mb{z}\circ e^{i\boldsymbol{\theta}^t} \circ R(\boldsymbol{\kappa}^t) - \mb{b}), 
\end{align}
where the final estimate is given as $\tilde{\mb{s}} = \mb{s}^{t_0}$. 

\end{itemize}

Moreover, to avoid convergence to a shallow local optimum, the spectral method based on the augmented matrix $\bar{\mb{A}}$ as used in Section~\ref{sec:4A} is carried out to obtain an initial estimate $\mb{s}^{0}$. 
In the end, after performing the initialization and EM-based iteration, the projection operation $\hat{s}_k = \arg\min_{s\in \mathcal{S}} \|\tilde{s}_k - s\|$ is applied to complete the constellation demapping.

\subsection{Comparison between biased GS and EM-GS algorithms}\label{sec:4C}

\ifx\onecol\undefined
\begin{figure}[t!]
\centering
\includegraphics[width=3in]{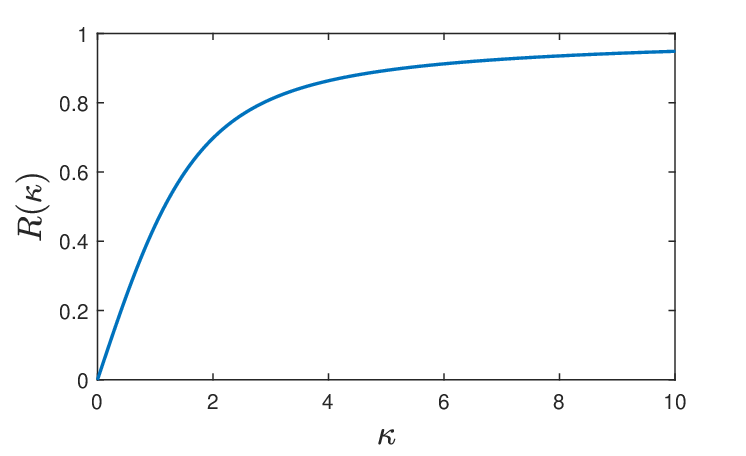}
\vspace*{-1em}
\caption{The high-pass filter corresponding to $R(\kappa)$ for $\kappa\in[0, 10]$.
}
\label{img:RFunction}
\vspace*{-1em}
\end{figure}
\else
\begin{figure}[t!]
\centering
\includegraphics[width=3.5in]{Figures/RFunction.eps}
\vspace*{-1em}
\caption{The high-pass filter corresponding to $R(\kappa)$ for $\kappa\in[0, 10]$.
}
\label{img:RFunction}
\vspace*{-1em}
\end{figure}
\fi

Comparing Algorithms \ref{alg1} and \ref{alg2}, one can discover that the major difference between the proposed EM-GS and biased GS is attributed to the ratio of Bessel functions, $R(\kappa_n^t)=\frac{I_1(\kappa_n^t)}{I_0(\kappa_n^t)}$.  According to the properties of Bessel functions~\cite{SpecialFunction_Fox2006}, $R(\kappa_n^t)$ is monotonically increasing and bounded between $[0, 1]$.  As plotted in Fig.~\ref{img:RFunction}, the shape of  $R(\kappa_n^t)$ resembles that of a \emph{high-pass filter}. Note that the variable $\kappa_n^t = \frac{2z_n|\mb{a}_n^H\mb{s}^{t-1} + b_n|}{\sigma^2}$ implicitly represents the SNR at the $n$-th antenna. A stronger received signal energy $z_n|\mb{a}_n^H\mb{s}^{t-1} + b_n|$ together with a smaller noise power $\sigma^2$ leads to a larger value of $\kappa_n^t$. Therefore, by multiplying $z_n$ with $R(\kappa_n^t)$, we can filter out the received signals $z_n$ that have a low SNR, while keeping those $z_n$ that have a large SNR and contain useful information about the symbol $\mb{s}$. Consequently, the high-pass nature of $R(\kappa_n^t)$ ensures that the EM-GS outperforms biased GS in the low SNR regime, while the EM-GS will automatically degenerate to biased GS at high SNRs, where $R(\kappa_n^t) \rightarrow 1$ when $\kappa_n^t \rightarrow +\infty$. Simulation results are provided in Section \ref{sec:4} to quantitatively compare the performance of biased GS and EM-GS algorithms.

\subsection{Performance Analysis}\label{sec:3E}
\subsubsection{CRLB Analysis}
We now derive the CRLB for evaluating the accuracy of the detected $\tilde{\mb{s}}$ from Algorithms 1 and 2. CRLB provides a lower bound of $\mathsf{E}(\|\mb{s} - \tilde{\mb{s}}\|_2^2)$ for any unbiased estimate. Specifically, the CRLB of the distribution $p(\mb{z}; \mb{s}) = \prod_{n = 1}^N p(z_n; \mb{s})$ is expressed as
\begin{align} \label{eq:CRLB}
     \mb{C}_{\mb{s}} = \mathsf{E}(\mb{s} - \tilde{\mb{s}})(\mb{s} - \tilde{\mb{s}})^H  \ge \mb{I}^{-1},
\end{align}
where $\mb{I}$ denotes the Fisher information matrix, the $(p,q)$-th entry of which is determined by 
$
    I_{pq} = -\sum_{n=1}^N\mathsf{E}\left(\frac{\partial^2}{\partial s_p^* \partial s_q} \log p(z_n;\mb{s})\right)$. 
An explicit expression of $\mb{I}$ is provided by the following lemma. 
\begin{lemma}\label{lemma2}
    \emph{The Fisher information matrix for the Rician distribution \eqref{eq:Rician} is given by 
    \begin{align}
        \mb{I} = \sum_{n = 1}^N \beta_n \mb{a}_n \mb{a}_n^H, 
    \end{align}
    where 
    $\beta_n = \frac{1}{\sigma^4}(\mathsf{E}\{z_n^2R^2(\kappa_n)\} - |\lambda_n|^2)$.}
\end{lemma}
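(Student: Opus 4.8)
The plan is to compute the Fisher information entries
$I_{pq} = -\sum_{n}\mathsf{E}\!\left(\frac{\partial^2}{\partial s_p^*\,\partial s_q}\log p(z_n;\mb{s})\right)$
directly, using Wirtinger (complex) calculus and exploiting that only the last two terms of
\[
\log p(z_n;\mb{s}) = \log\tfrac{2z_n}{\sigma^2} - \tfrac{z_n^2+|\lambda_n|^2}{\sigma^2} + \log I_0(\kappa_n)
\]
depend on $\mb{s}$, where $\lambda_n = \mb{a}_n^H\mb{s}+b_n$ and $\kappa_n = 2z_n|\lambda_n|/\sigma^2$. Since $\lambda_n$ is holomorphic in $\mb{s}$, we have $\partial\lambda_n/\partial s_q = \overline{a_{n,q}}$, $\partial\lambda_n^*/\partial s_p^* = a_{n,p}$, with the conjugate derivatives vanishing. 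First I would take the single derivative $\partial/\partial s_q$, using the Bessel identity $\frac{d}{dx}\log I_0(x) = I_1(x)/I_0(x) = R(x)$ together with the chain rule through $|\lambda_n|=\sqrt{\lambda_n\lambda_n^*}$, to obtain a clean factor $\lambda_n^*\overline{a_{n,q}}$ multiplied by a scalar depending only on $z_n$, $|\lambda_n|$, and $R(\kappa_n)$.

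Differentiating once more with respect to $s_p^*$ is the next step. Because $\overline{a_{n,q}}$ is constant and $\partial\lambda_n^*/\partial s_p^* = a_{n,p}$, every surviving term carries the factor $a_{n,p}\overline{a_{n,q}}$, which is exactly the $(p,q)$-entry of $\mb{a}_n\mb{a}_n^H$; this already establishes the rank-one structure $\mb{I}=\sum_n\beta_n\mb{a}_n\mb{a}_n^H$ before any expectation is taken. Tracking the square-root and $\kappa_n$ chain rules carefully, I expect the scalar multiplying $a_{n,p}\overline{a_{n,q}}/\sigma^2$ to be $-1 + \frac{z_nR(\kappa_n)}{2|\lambda_n|} + \frac{z_n^2R'(\kappa_n)}{\sigma^2}$, so that
\[
\beta_n = \frac{1}{\sigma^2}\,\mathsf{E}\!\left[\,1 - \frac{z_nR(\kappa_n)}{2|\lambda_n|} - \frac{z_n^2R'(\kappa_n)}{\sigma^2}\,\right].
\]

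Two ingredients then collapse this expectation into the stated form. The first is the Riccati-type identity for the Bessel ratio, $R'(x) = 1 - R(x)/x - R^2(x)$, which follows from the recurrences $I_0'=I_1$ and $I_1' = I_0 - I_1/x$. Evaluating it at $\kappa_n$ and using $1/\kappa_n = \sigma^2/(2z_n|\lambda_n|)$ gives $z_n^2R'(\kappa_n) = z_n^2 - \frac{\sigma^2 z_nR(\kappa_n)}{2|\lambda_n|} - z_n^2R^2(\kappa_n)$, whereupon the awkward $\frac{z_nR(\kappa_n)}{2|\lambda_n|}$ cross-terms cancel exactly and leave $\beta_n = \frac{1}{\sigma^2}\big(1 - \tfrac{\mathsf{E}\{z_n^2\}}{\sigma^2} + \tfrac{\mathsf{E}\{z_n^2R^2(\kappa_n)\}}{\sigma^2}\big)$. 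The second ingredient is the Rician second-moment identity $\mathsf{E}\{z_n^2\} = \sigma^2 + |\lambda_n|^2$, which is immediate from $z_n = |\lambda_n + w_n|$ with $w_n\sim\mathcal{CN}(0,\sigma^2)$ since $\mathsf{E}\{|w_n|^2\}=\sigma^2$. Substituting it removes the constant term and yields $\beta_n = \frac{1}{\sigma^4}\big(\mathsf{E}\{z_n^2R^2(\kappa_n)\} - |\lambda_n|^2\big)$, as claimed.

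The main obstacle I anticipate is the bookkeeping in the second Wirtinger derivative: the compositions of $\sqrt{\lambda_n\lambda_n^*}$ and $R(\kappa_n)$ generate several contributions (an $R'$ term, an $R/|\lambda_n|$ term, and an $R^2$ term) that must be combined with the right signs, and it is essential to apply the Riccati identity at the correct moment so that the $R'$ and $\frac{z_nR(\kappa_n)}{2|\lambda_n|}$ pieces cancel rather than accumulate. Everything else—the holomorphic derivatives of $\lambda_n$ and the Rician moment computation—is routine.
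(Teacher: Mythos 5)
Your proof is correct, and all of its key computations check out: the pointwise second derivative does equal $\frac{1}{\sigma^2}\left(-1+\frac{z_nR(\kappa_n)}{2|\lambda_n|}+\frac{z_n^2R'(\kappa_n)}{\sigma^2}\right)a_{n,p}\overline{a_{n,q}}$, the Riccati identity $R'(x)=1-R(x)/x-R^2(x)$ makes the $\frac{z_nR(\kappa_n)}{2|\lambda_n|}$ cross terms cancel exactly, and the Rician second moment then yields $\beta_n=\frac{1}{\sigma^4}\left(\mathsf{E}\{z_n^2R^2(\kappa_n)\}-|\lambda_n|^2\right)$. Your route differs from the paper's in one substantive way. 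The paper keeps the Hessian split into a term proportional to $\frac{\partial^2|\lambda_n|^2}{\partial s_p^*\partial s_q}$ and a term proportional to $\frac{\partial|\lambda_n|^2}{\partial s_p^*}\frac{\partial|\lambda_n|^2}{\partial s_q}$ (equations \eqref{eq:d1}--\eqref{eq:d2}) and takes expectations of each separately; this forces it to invoke the nontrivial identity $\mathsf{E}\{z_nR(\kappa_n)\}=|\lambda_n|$, which it imports from \cite{PR_Zhu2023}. You instead collapse both contributions onto the common factor $a_{n,p}\overline{a_{n,q}}$ \emph{before} taking any expectation, so the troublesome $z_nR(\kappa_n)/|\lambda_n|$ terms cancel pointwise and the only probabilistic input you need is the elementary moment $\mathsf{E}\{z_n^2\}=|\lambda_n|^2+\sigma^2$ --- a cleaner, more self-contained derivation of $\mb{I}$. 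What you give up is the paper's verification of the regularity condition $\mathsf{E}\left(\frac{\partial}{\partial s_p^*}\log p(z_n;\mb{s})\right)=0$, which is exactly where the identity $\mathsf{E}\{z_nR(\kappa_n)\}=|\lambda_n|$ is genuinely needed: without it, the negative-expected-Hessian quantity you compute is not guaranteed to coincide with the score covariance, and the CRLB inequality \eqref{eq:CRLB} that this lemma feeds into is not justified. So you should append that one-line check (or cite the identity) to make your argument serve the lemma's purpose in the paper.
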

\begin{proof}
    (See Appendix C).
\end{proof}

As a result, the mean square error $\mathsf{E}(\|\mb{s} -\tilde{\mb{s}}\|_2^2)$ is lower bounded by $\mathsf{Tr}(\mb{I}^{-1})$, 
where $\mathsf{Tr}(\cdot)$ denotes the trace operator. In Section \ref{sec:5}, we  numerically show that our proposed algorithms can approach the above CRLB. 




\subsubsection{Computational Complexity Analysis}
It is of interest to compare the computational complexity of the biased GS, EM-GS, and exhaustive search algorithms. Compared to the biased GS algorithm, the EM-GS algorithm only needs the additional computing of $R(\kappa_n^t)$, which can efficiently rely on searching over a lookup table. Hence, these two algorithms share the same order of complexity. Take the EM-GS algorithm as an example. Its complexity is mainly attributed to the initialization and iteration operations. The former requires the eigenvalue decomposition of $\mb{M}\in\mathbb{C}^{K\times K}$, which has a complexity of $\mathcal{O}(K^3)$. Moreover, the complexity of solving the linear regression problem in \eqref{eq:LR} with $t_0$ iterations is given as $\mathcal{O}(t_0NK^2)$. As a result, the computational complexity of both the biased GS and EM-GS algorithms is $\mathcal{O}(t_0NK^2 + K^3)$.

The exhaustive search method has to try all possible symbol combinations for all users from the QAM constellation, $\mathcal{S}$, to optimize \eqref{eq:LS} and \eqref{eq:ML}. Since the number of combinations is $|\mathcal{S}|^K$, the complexity of exhaustive search is $\mathcal{O}(|\mathcal{S}|^K NK)$.  It is clear that the biased GS and EM-GS algorithms have much lower computational complexity than an exhaustive search. 

{\color{black}
\section{Extension to Channel Estimation}
In previous discussions, we assume that the channel state information (CSI) is perfectly known to the receiver. 
In case the perfect CSI is not available, we can estimate the CSI, $\mathbf{A}$, by sending pilot signals to the receiver. 

Consider the received signal at the $n$-th antenna: $z_n = |\mathbf{a}_n^H \mathbf{s} + b_n + w_n| = |\mathbf{s}^H \mathbf{a}_n + b_n^* + w_n^*|$, where $\mathbf{a}_n$ is the channel vector to be estimated. In practice, the LO is deployed at a fixed position near the atomic receiver. Thus, the coefficient $b_n$ almost keeps unchanged and can be obtained in advance by measuring the physical locations of LO and receiver. To estimate the channel coefficients $\mathbf{a}_n$, we let the $K$ users send known pilot signals during $P$ time slots to the receiver, which are denoted as $\mathbf{s}_p\in\mathbb{C}^{K\times 1}, p\in \{1,\cdots,P\}$. The overall received signal $\mathbf{z}_n \in \mathbb{R}^{P\times 1}$ is thus expressed as 
\begin{align}\label{eq:CE}
    \mathbf{z}_n &= |[\mathbf{s}_1, \cdots,\mathbf{s}_P]^H \mathbf{a}_n + [b_n, \cdots,b_n]^H + [w_{n,1}, \cdots, w_{n,P}]^H| \notag \\
    & =|\mathbf{S}^H \mathbf{a}_n + \mathbf{b}_n + \mathbf{w}_n|,
\end{align}
where $\mathbf{S} = [\mathbf{s}_1, \dots,\mathbf{s}_P] \in \mathbb{C}^{K\times P}$ refers to the known pilot matrix, $\mathbf{b}_n = [b_n,\cdots,b_n]^H \in \mathbb{C}^{P\times 1}$ the reference signal, and $\mathbf{w}_n = [w_{n,1}, \cdots, w_{n,P}]^H \in \mathbb{C}^{P\times 1}$ the noise. It is clear that the channel estimation model in \eqref{eq:CE} for recovering $\mb{a}_n$ from $\mb{z}_n$ has the same mathematical structure as the signal detection model $\mb{z} = |\mb{A}^H\mb{s} + \mb{b} + \mb{w}|$ for detecting $\mb{s}$ from $\mb{z}$. Thereafter, both the biased GS and EM-GS algorithm can be employed to estimate the channel coefficients $\mb{a}_n$. In the end, this process can be simultaneously carried out on all receive antennas to recover the whole channel matrix $\mb{A} = [\mb{a}_1, \mb{a}_2, \cdots, \mb{a}_N]$. This completes the channel estimation task.
}

\section{Simulation Results}\label{sec:5}
\subsection{Experimental Settings}

\begin{table}[t]
    \centering
    \caption{Simulation Parameters of Channel Model}
    \begin{tabular}{|c|c|}
    \hline
    { \textbf{Channel parameters}} & {\textbf{Values}}                   \\ \hline
    { Number of clusters}          & {23}                                \\ \hline
    { Number of paths per cluster}  & {20}                                \\ \hline
    { Path gains}                  & {$\mathcal{CN}(0,1)$} \\ \hline
    { Incident angles}             & {$\mathcal{U}(-90^{\circ}, 90^{\circ})$} \\ \hline
    { Maximum angle spread per cluster}                & {$\mathcal{U}(-5^{\circ}, 5^{\circ})$}                \\ \hline
    { Maximum delay spread}                & {$\mathcal{U}(0\:\text{ns}, 30\:\text{ns})$}                \\ \hline
    \end{tabular}\label{tab1}
    \vspace*{-1em}
\end{table}

The default simulation setups for atomic MIMO receivers are as follows unless specified otherwise.  
The number of atomic antennas is $N = 36$, while the number of single-antenna users is $K = 3$. 
The 4-QAM and 16-QAM modulators are considered. 
The Rydberg energy levels $52D_{5/2}$ and $53 P_{3/2}$ are adopted for detecting the sub-6G signals of frequency $\omega_{eg} = 2\pi \times 5$ GHz. Utilizing the Python package~\cite{SIBALIC2017319}, the transition dipole moment $\boldsymbol{\mu}_{eg}$ over the states $52D_{5/2}$ and $53 P_{3/2}$ is calculated as $[0, 1785.916\:qa_0, 0]^T$, 
where $a_0 = 5.292 \times 10^{-11}$ m specifies the Bohr radius. 
Taking into account the randomness of the polarization direction, vectors $\boldsymbol{\epsilon}_{nkl}$ and $\boldsymbol{\epsilon}_{b,n}$ are randomly sampled from unit circles perpendicular to their incident angles. 
In addition, the channel coefficients are generated using the standard 3GPP TR 38.901 model, whose key parameters are given in Table \ref{tab1}. 
The received SNR is defined as 
\begin{align}
{\rm SNR} = \frac{\mathsf{E}(|\mb{a}_n^H\mb{s}|^2)}{\mathsf{E}(|w_n|^2)}.
\end{align} 
Moreover, we define the \emph{reference-to-signal ratio} (RSR) as 
\begin{equation}
    {\rm RSR} = \frac{\mathsf{E}(|b_n|^2) }{\mathsf{E}(|a_{nk}s_k|^2)},
\end{equation}
which accounts for the relative intensity of the reference source. 
In our simulation, the SNR is varying from $-5\:{\rm dB}$ to $12\:{\rm dB}$ and  RSR grows from $0\:{\rm dB}$ to $25\:{\rm dB}$.   The number of iterations $t_0$ is set as 50 for both  Algorithms \ref{alg1} and \ref{alg2}.

Five benchmarking schemes are considered in performance comparison with the proposed biased GS and EM-GS algorithms. 
\begin{itemize}
    \item \emph{ZF with known phase}: Assume the ideal case where the true phase of $\mb{y}$ is known in advance. Then the \emph{zero-forcing} (ZF) detector is employed to recover the QAM symbol, $\mb{s}$. 
    \item \emph{CRLB}: We use numerical integration to calculate the normalized CRLB, $\frac{\mathsf{Tr}(\mb{I}^{-1})}{\mathsf{E}(\|\mb{s}\|_2^2)} = \frac{1}{K}\mathsf{Tr}(\mb{I}^{-1})$. This benchmark is only used for evaluating the NMSE performance, since the normalized CRLB can be regarded as a NMSE lower bound of all PR solvers. 
    \item \emph{Exhaustive search (LS)}: This method, employed for evaluating the BER, exhaustively searches all feasible constellation points to solve the LS problem in \eqref{eq:LS}.
    \item \emph{Exhaustive search (ML)}: This scheme is similar to the preceding one but aims to solve the ML problem in \eqref{eq:ML}.
    \item \emph{CM-ZF}: 
    \color{black} This method is proposed to infer amplitude modulation symbols 
    from \emph{channel magnitude} (CM)~\cite{CM_Yue2019}. It utilizes an 
    expanded linear MIMO model to approximate the PR model in \eqref{eq:z=|As|} 
    and addresses it via a ZF detector. 
    For a fair comparison, we extend this approximation to the biased PR problem and apply the idea of CM-ZF algorithm to recover QAM symbols.
\end{itemize}

\subsection{Evaluation of NMSE Performance}

\ifx\onecol\undefined
\begin{figure}
\centering
\includegraphics[width=3.2in]{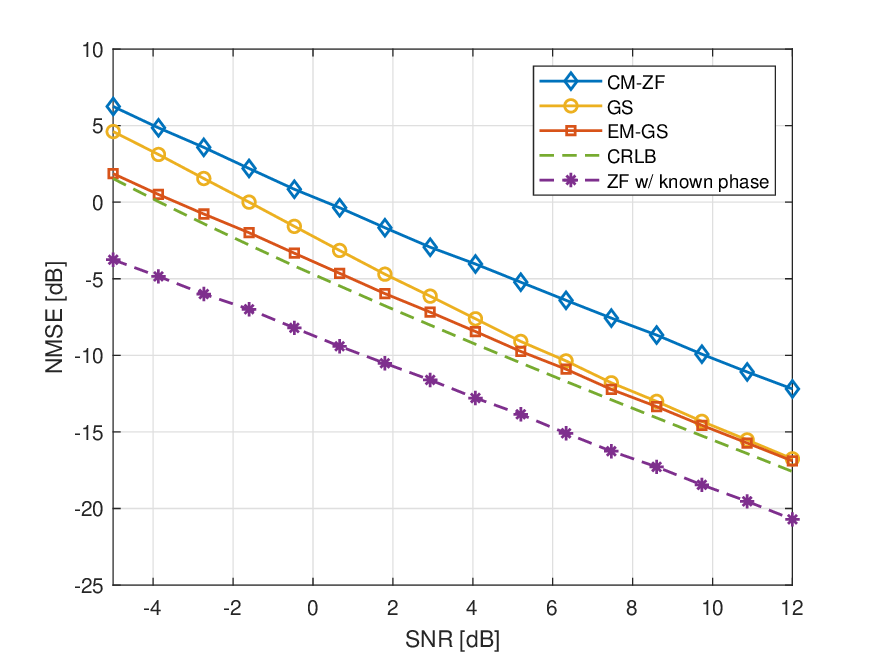}
\vspace*{-1em}
\caption{The dependence of NMSE performance on SNR for a 16-QAM modulator under $12\:{\rm dB}$ RSR.}
\label{img:nmse_snr}
\vspace*{-1em}
\end{figure}
\else
\begin{figure}
\centering
\includegraphics[width=3.4in]{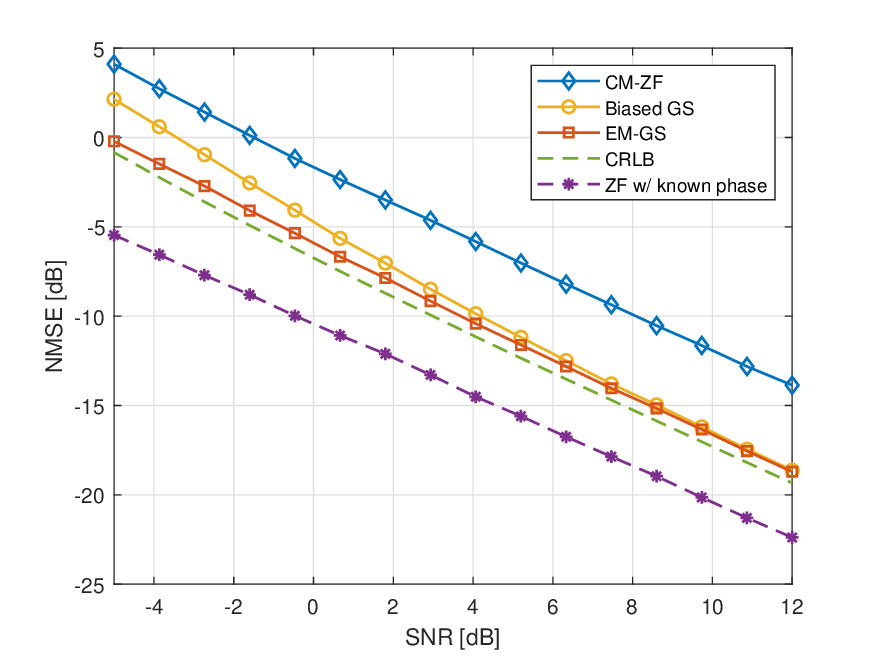}
\vspace*{-1em}
\caption{The dependence of NMSE performance on SNR for a 16-QAM modulator under $12\:{\rm dB}$ RSR.}
\label{img:nmse_snr}
\vspace*{-1em}
\end{figure}
\fi

Define the NMSE as  $\frac{\mathsf{E}(\|\mb{s} -\tilde{\mb{s}}\|^2_2)}{\mathsf{E}(\|\mb{s}\|_2^2)} = \frac{1}{K} \mathsf{E}(\|\mb{s} -\tilde{\mb{s}}\|^2_2)$, where $\tilde{\mb{s}}$ represents the recovered signal without constellation demapping, e.g., the direct outputs of Algorithms 1 and 2. 
The curves of NMSE versus SNR are plotted in Fig.~\ref{img:nmse_snr}. 
The RSR is fixed as $12\:{\rm dB}$ and the 16-QAM modulator is employed. 
It is observed from Fig.~\ref{img:nmse_snr} that both the biased GS and the proposed EM-GS algorithms outperform the CM-ZF by $2\sim 5\:{\rm dB}$ in NMSE. 
Besides, the proposed EM-GS algorithm is always better than the biased GS algorithm, especially in the low SNR regime, e.g., the performance gap is around 2 dB when ${\rm SNR} = -4\:{\rm dB}$.
This observation is consistent with the high-pass nature of the function $R(\cdot)$ we analyzed earlier. 
More importantly, the NMSE performance of EM-GS is close to CRLB, given that the ML estimator is an asymptotic minimum-variance unbiased estimator. 
The final interesting observation is that the NMSE gap between the ZF with known phase and  CRLB approaches $3$ dB as the SNR increases, implying that the vanished phase information imposes a $3$ dB NMSE loss to atomic receivers.

\ifx\onecol\undefined
\begin{figure}
\centering
\includegraphics[width=3.2in]{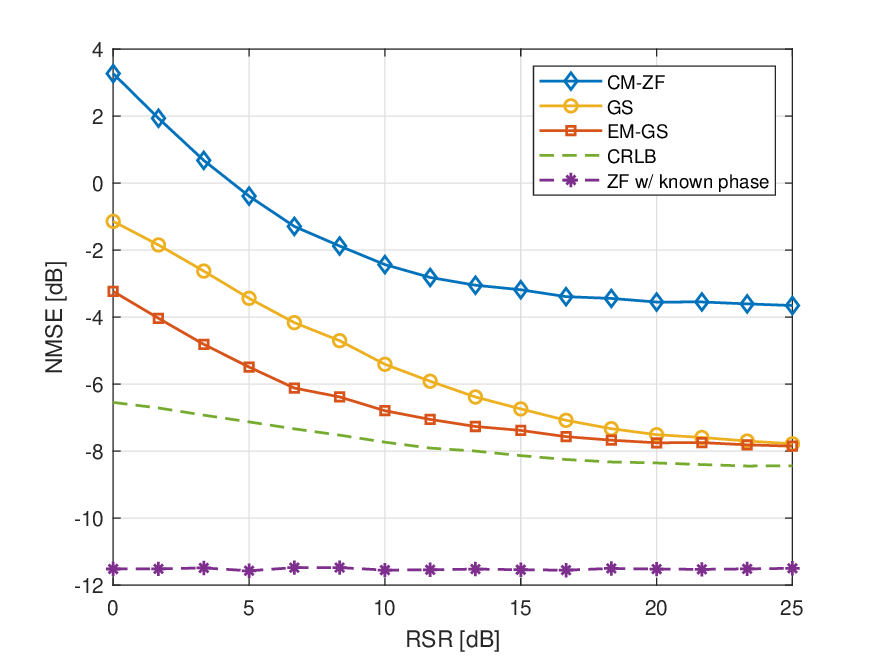}
\vspace*{-1em}
\caption{The effect of RSR on the NMSE performance for a 16-QAM modulator under $3\:{\rm dB}$ SNR.}
\label{img:nmse_rsr}
\vspace*{-1em}
\end{figure}
\else
\begin{figure}
\centering
\includegraphics[width=3.4in]{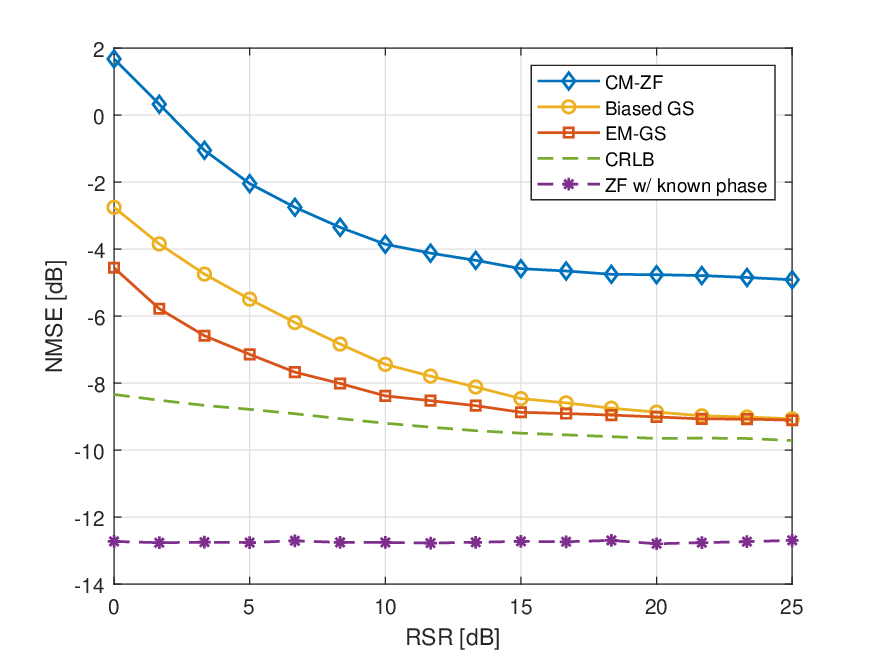}
\vspace*{-1em}
\caption{The effect of RSR on the NMSE performance for a 16-QAM modulator under $3\:{\rm dB}$ SNR.}
\label{img:nmse_rsr}
\vspace*{-1em}
\end{figure}
\fi

We then investigate the effect of RSR on NMSE as shown in Fig.~\ref{img:nmse_rsr}, where the SNR is fixed as $3$ dB and the 16-QAM modulator is adopted. 
As opposed to a uniform NMSE performance achieved by the ZF with known phase, the NMSE of all PR solvers rapidly declines by $5$ dB as the RSR increases. 
This result is surprising but still reasonable because the atomic MIMO receiver follows a non-linear model (\ref{eq:BiasedPR}).  
A stronger reference signal makes it easier for PR solvers to estimate the phase difference between $\mb{s}$ and $s_b$, giving rise to a more accurate detection of the true phase of $\mb{s}$. 
Furthermore, the large RSR situation is common in practice because a LO can be deployed near the atomic receiver. 
Taking account of the path loss that decays quadratically with the path length, the strength of the reference signal can be several times greater than the user signal. 
For example, if the LO-to-atomic receiver distance is shorter than a quarter of the user-to-atomic receiver distance, the RSR is greater than $10\log_{10}4^2\approx 12$ dB.

\ifx\onecol\undefined
\begin{figure}
    \centering
    \subfigure[4-QAM, $N\times K  = 36\times 3$]{\includegraphics[width=3.2in]{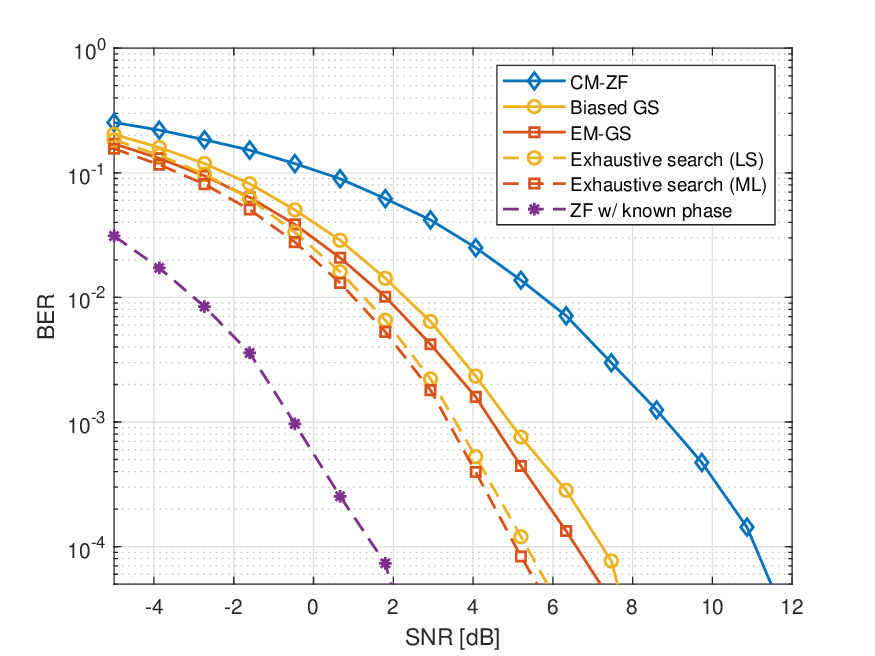}}\\
    \vspace*{-1em}	
    \subfigure[16-QAM, $N \times K = 100 \times 6$]{\includegraphics[width=3.2in]{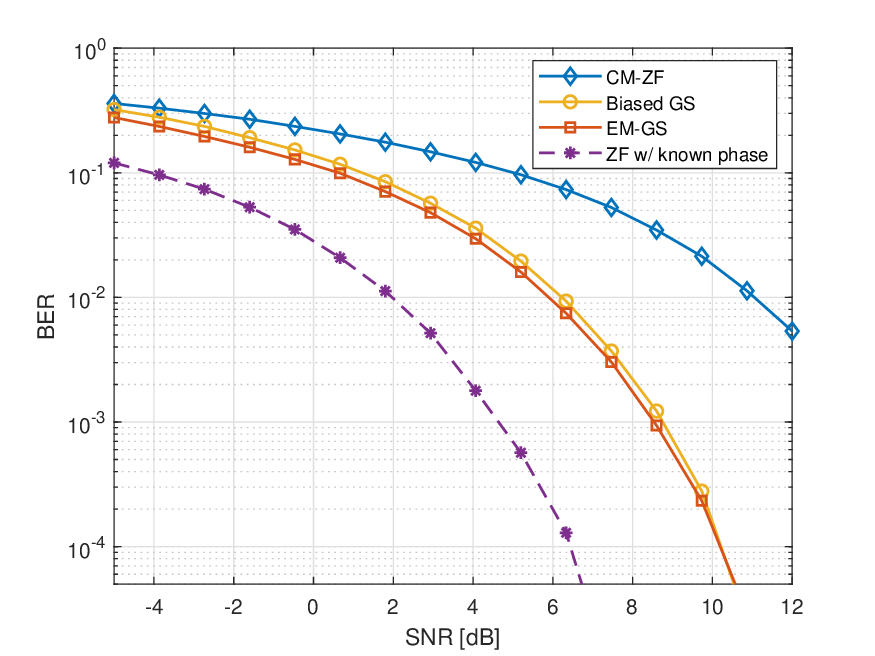}} 
    \caption{The effect of SNR on the BER performance under (a) the small-scale configuration (i.e., $N\times K = 36\times 3$, 4-QAM modulator) and (b) the large-scale configuration (i.e., $N\times K = 100\times 6$, 16-QAM modulator).} 
	\vspace*{-1em}
	\label{img:ber_snr}
\end{figure}
\else 
\begin{figure*}
	\centering
	\subfigure[4-QAM, $N\times K  = 36\times 3$]{
		\begin{minipage}[t]{0.49\linewidth}
			\centering
			\includegraphics[width=3in]{Figures/BER_SNR_N36K3_4QAM.eps}\\
			\vspace{0.02cm}
		\end{minipage}%
	}%
	\subfigure[4-QAM, $N\times K  = 36\times 3$]{
		\begin{minipage}[t]{0.49\linewidth}
			\centering
			\includegraphics[width=3in]{Figures/BER_SNR_N100K6_16QAM.eps}\\
			\vspace{0.02cm}
		\end{minipage}%
	}%
	\centering
	\caption{The effect of SNR on the BER performance under (a) the small-scale configuration (i.e., $N\times K = 36\times 3$, 4-QAM modulator) and (b) the large-scale configuration (i.e., $N\times K = 100\times 6$, 16-QAM modulator).}
	\vspace{-0.2cm}
	\label{img:ber_snr}
\end{figure*}
\fi


\subsection{Evaluation of BER Performance}
To evaluate the BER performance, the constellation demapping step is introduced to project the recovered signals $\tilde{s}_k$ for each user to the nearest constellation point $\hat{s}_k$ and then to 0-1 bits. 
The curves of BER versus SNR are plotted in Fig.~\ref{img:ber_snr}. 
The RSR is set as $12$ dB. 
Fig.~\ref{img:ber_snr}(a) adopts a small-scale configuration: $N\times K = 36\times 3$ with a 4-QAM modulator, while Fig.~\ref{img:ber_snr}(b) employs a large-scale configuration: $N\times K = 100 \times 6$ with a 16-QAM modulator. It can be seen from Fig.~\ref{img:ber_snr}(a) that the biased GS and the EM-GS algorithms exhibit a remarkable BER reduction compared with CM-ZF and can perform similarly as the exhaustive search method with much lower complexity. 
Furthermore, the EM-GS algorithm consistently outperforms the biased GS algorithm due to the high-pass filter $R(\cdot)$ (see their comparison in Section \ref{sec:4C}). 
For the large-scale configuration depicted in Fig.~\ref{img:ber_snr}(b), the computation of exhaustive search method is prohibitive, so it is excluded from this comparison. 
We can observe that the SNR gap between the EM-GS algorithm and the ZF method with known phase for realizing the same BER level is between $3\sim4$ dB. 
This fact indicates that despite the lack of phase information, our EM-GS based detector can attain a similar BER trend as the ZF detector in the ideal case, demonstrating the effectiveness of multi-user atomic-MIMO detection.

Finally, the influence of RSR on BER performance is illustrated in Fig.~\ref{img:ber_rsr}. 
The SNR is fixed as $3$ dB and a 4-QAM modulator is adopted. 
Similar to Fig.~\ref{img:nmse_rsr}, one can observe that the increase in RSR can considerably reduce the BER of all PR solvers. 
Take the proposed EM-GS as an example. More than one order of magnitude reduction in BER is achievable by increasing the RSR from $0$ dB to $20$ dB.  
Therefore, we can draw the conclusion that a strong received reference signal is necessary for detecting QAM symbols by atomic receivers.


\ifx\onecol\undefined
\begin{figure}
\centering
\includegraphics[width=3.4in]{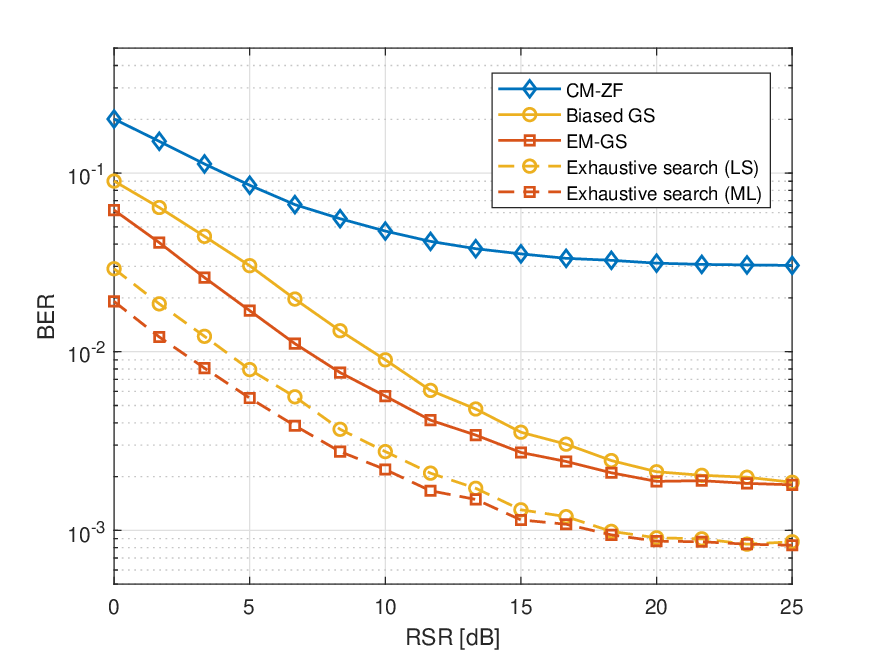}
\vspace*{-1em}
\caption{The effect of RSR on the BER performance under different detection algorithms for a 16-QAM modulator under $3\:{\rm dB}$ SNR.}
\label{img:ber_rsr}
\vspace*{-1em}
\end{figure}
\else
\begin{figure}
\centering
\includegraphics[width=3.4in]{Figures/BER_RSR_N36K3_4QAM_SNR3dB.eps}
\vspace*{-1em}
\caption{The effect of RSR on the BER performance achieved by different detection algorithms for a 16-QAM modulator under $3\:{\rm dB}$ SNR.}
\label{img:ber_rsr}
\vspace*{-1em}
\end{figure}
\fi

\section{Conclusions} \label{sec:6}
In this paper, we have demonstrated the feasibility of atomic MIMO receivers, marking the first attempt to introduce atomic receivers into MIMO communications. 
Different from the classical linear MIMO model, the signal detection of atomic MIMO receiver is shown to be a non-linear biased PR problem. 
Accordingly, two algorithms, the biased GS and the EM-GS, are proposed to detect symbols based on the LS and ML criteria, respectively, which are verified to be near-optimal. 
Comprehensive experiment results validate the efficiency and effectiveness of atomic MIMO receiver. 

This work serves as an important step towards advanced atomic wireless receivers for next-generation communication systems. 
Several unexplained issues warrant follow-up studies, such as theoretically determining the channel capacity of atomic receivers, given that the transmission model is no longer linear and Gaussian.
Furthermore, integrating atomic receivers with various modern communication techniques, such as wideband, cell-free, mmWave/THz, and RIS-aided communications, and developing such receivers to enable over-the-air computation, edge learning, and ISAC, are also of interest for 6G research. 
Furthermore, exploring the potential incorporation of more quantum information technologies into atomic receivers to achieve unprecedented communication capabilities presents a promising research direction.
As a preliminary study, this work is expected to inspire more innovations in the development of atomic receivers for advancing wireless communications.

\appendix
{
\color{black}

\subsection{Rotating Wave Approximation for Solving  \eqref{eq:scheqn}}
We follow the standard RWA to solve \eqref{eq:scheqn}. For ease of notations, the term ``$(t)$" in variables $\alpha_{e,n}(t)$, $\alpha_{g,n}(t)$, and $\xi_n(t)$ are neglected temporarily.
The Schrödinger equation in \eqref{eq:scheqn} is explicitly written as 
\begin{align}\label{eq:A1}
i\hbar \left(
    \begin{array}{c}
          \dot{\alpha}_{e,n}\\
         \dot{\alpha}_{g,n}
    \end{array} \right) = 
    \left(
    \begin{array}{cc}
          \hbar\omega_e & \xi_n\\
         \xi_n^* & \hbar \omega_g
    \end{array} \right)\left(
    \begin{array}{c}
          {\alpha}_{e,n}\\
         {\alpha}_{g,n}
    \end{array} \right)
    ,
\end{align}
where $\dot{\alpha}$ denotes the time derivative of $\alpha$. 
To solve this time-varying differential equation, the first step is to switch \eqref{eq:A1} into the interaction picture of signal frequency. Specifically, let $\widetilde{\alpha}_{e,n} = e^{i(\omega + \omega_g)t}\alpha_{e,n}$ and $\widetilde{\alpha}_{g,n} = e^{i\omega_gt}\alpha_{g,n}$. By applying the relationships in \eqref{eq:A1}, the differential equation of $\widetilde{\alpha}_{e,n}$ is derived as 
\begin{align}\label{eq:A2}
 i\hbar \dot{\widetilde{\alpha}}_{e,n} &= e^{i(\omega + \omega_g)t}  i\hbar \dot{\alpha}_{e,n} - \hbar(\omega + \omega_g)e^{i(\omega + \omega_g)t}\alpha_{e,n}\notag \\
 &=e^{i(\omega + \omega_g)t}(\hbar \omega_e  \alpha_{e,n} + \xi_n \alpha_{g,n}) - \hbar(\omega + \omega_g){\widetilde{\alpha}}_{e,n} \notag \\
 &=\hbar \delta {\widetilde{\alpha}}_{e,n} + e^{i\omega t}\xi_n \widetilde{\alpha}_{g,n},
\end{align}
and the differential equation of $\widetilde{\alpha}_{g,n}$ is similarly derived as
\begin{align}\label{eq:A3}
    i\hbar \dot{\widetilde{\alpha}}_{g,n} &= e^{i\omega_gt}  i\hbar \dot{\alpha}_{g,n} - \hbar\omega_g{\widetilde{\alpha}}_{g,n}  = e^{-i\omega t}\xi_n^* \widetilde{\alpha}_{e,n}.
\end{align}
Given the form of $\xi_n$ in \eqref{eq:xin}, the time-dependent coefficient $e^{i\omega t}\xi_n$ in \eqref{eq:A2} and \eqref{eq:A3} is written as 
\begin{align}
    e^{i\omega t}\xi_n &= \xi_n^{-} + \xi_n^{+}, 
\end{align}
where $\xi_n^{-} = \frac{1}{2}\sum_{k}\sum_{l}{\boldsymbol{\mu}}_{eg}^T\boldsymbol{\epsilon}_{nkl}  \sqrt{P_k}\rho_{nkl} |s_k| e ^{-j(\phi_{nkl} + \gamma_k)}$ is a time-independent term while the time-dependent term $\xi_n^{+} = \frac{1}{2}\sum_{k}\sum_{l}{\boldsymbol{\mu}}_{eg}^T\boldsymbol{\epsilon}_{nkl}  \sqrt{P_k}\rho_{nkl} |s_k| e ^{j(\phi_{nkl} + \gamma_k)} e^{j2\omega t}$ oscillates with frequency $2\omega$. \emph{Due to the rapid oscillation of $\xi_n^{+}$, the RWA states that the time integrals of $\xi_n^{+}\widetilde{\alpha}_{g,n}$ and ${\xi_n^{+}}^*\widetilde{\alpha}_{e,n}$ quickly averages to zero, and thus can be neglected.} Thereafter, the equations \eqref{eq:A2} and \eqref{eq:A3} are approximated as 
\begin{align}\label{eq:A5}
i\hbar \left(
    \begin{array}{c}
          \dot{\widetilde{\alpha}}_{e,n}\\
         \dot{\widetilde{\alpha}}_{g,n}
    \end{array} \right) = 
    \left(
    \begin{array}{cc}
          \hbar\delta & \xi_n^{-}\\
         {\xi_n^{-}}^* & 0
    \end{array} \right)\left(
    \begin{array}{c}
          \widetilde{\alpha}_{e,n}\\
         \widetilde{\alpha}_{g,n}
    \end{array} \right)
    .
\end{align}
The effective Hamiltonian is thereby $\widetilde{H}_n = \hbar \delta \ket{e}\bra{e} + \xi_n^{-}\ket{e}\bra{g} + {\xi_n^{-}}^*\ket{g}\bra{e}$. 
Clearly, the RWA transforms the time-dependent Hamiltonian in \eqref{eq:A1} into a time-independent one, $\widetilde{H}$, which simplifies our problem significantly. To solve this new Schrödinger equation, \eqref{eq:A5}, we need to find the eigenvector decomposition (EVD) of $\widetilde{H}_n$, denoted as $\widetilde{H}_n = E_{n}^+\ket{E_n^+}\bra{E_n^{+}} + E_n^-\ket{E_n^-}\bra{E_n^{-}}$. Specifically, the eigenvalues are calculated as
\begin{align}
E_n^{\pm} = \frac{1}{2}\hbar \delta \pm \sqrt{|\xi_n^{-}|^2 + \frac{1}{4}\hbar^2\delta^2 }. 
\end{align}
The corresponding eigenvectors are derived as 
\begin{align}
    \left\{
    \begin{array}{c}
         \ket{E_n^{+}} = e^{-j\Psi_n}\sin(\Phi_n/2)\ket{g} + \cos(\Phi_n/2)\ket{e}  \\
         \ket{E_n^{-}} = e^{-j\Psi_n}\cos(\Phi_n/2)\ket{g} - \sin(\Phi_n/2)\ket{e} 
    \end{array}
    \right. ,
\end{align}
where $\Phi_n = \arctan(\frac{2|\xi_n^{-}|}{\hbar\delta})$ and $\Psi_n = \angle(\xi_n^{-})$. 
Thereafter, the evolution of $(\widetilde{\alpha}_{e,n},
         \widetilde{\alpha}_{g,n})$ driven by \eqref{eq:A5} is expressed as
 \begin{align}
  (\widetilde{\alpha}_{e,n},
         \widetilde{\alpha}_{g,n})^T = c_{n}^+ e^{-i \frac{E_n^+}{\hbar} t} \ket{E_n^{+}} + c_{n}^- e^{-i \frac{E_n^-}{\hbar} t} \ket{E_n^{-}}.
 \end{align}
Here, the coefficients $c_{n}^{\pm}$ are determined by the initial condition. Given the initial states $\widetilde{\alpha}_{e, n}(0) = \alpha_{e, n}(0) = 0$ and $\widetilde{\alpha}_{g,n}(0) = {\alpha}_{g,n}(0) = 1$, 
associated with the initial derivative $\dot{\widetilde{\alpha}}_{e, n}(0) = -i\delta \widetilde{\alpha}_{e, n}(0) -i\frac{\xi_n^{-}}{\hbar}\widetilde{\alpha}_{g,n}(0) = -i\frac{\xi_n^{-}}{\hbar}$, 
the probability density is obtained as 
\begin{align}
    |\alpha_{e, n}(t)|^2 = |\widetilde{\alpha}_{e, n}(t)|^2 = \frac{\Omega_{R, n}^2}{\Omega_{n}^2} \sin^2\left(\frac{\Omega_nt}{2}\right), 
\end{align}
where  $\Omega_{R,n} = 2|\xi_n^-|/\hbar = 2|{\xi_n^-}^*|/\hbar$ and $\Omega_n = \sqrt{\Omega_{R,n}^2 + \delta^2}$. Last, owing to the the resonance condition that $\delta = 0$, we arrive at \eqref{eq:RabiMIMO}. 

}

\subsection{Proof of Lemma 1} 

  Substituting \eqref{eq:Gaussian} and \eqref{eq:von Mises} into \eqref{eq:surrogate}, the surrogate function can be expressed as
  \ifx\onecol\undefined
  \begin{align}\label{eq:L1P1}
      Q(\mb{s}|\mb{s}^{t-1})
      & = \sum_{n = 1}^N
      \int_0^{2\pi} p(\theta_n|z_n, \mb{s}^{t - 1}) \log p(z_n, \theta_n; \mb{s}) \text{d}\theta_n  \notag\\
      &\overset{(a)}{=} - \frac{1}{\sigma^2}\sum_{n = 1}^N (|\lambda_n|^2 - 2|\lambda_n| z_n L_n) + C, 
  \end{align}
  \else 
  \begin{align}\label{eq:L1P1}
      Q(\mb{s}|\mb{s}^{t-1})
      & = \sum_{n = 1}^N
      \int_0^{2\pi} p(\theta_n|z_n, \mb{s}^{t - 1}) \log p(z_n, \theta_n; \mb{s}) \text{d}\theta_n  \overset{(a)}{=} C- \frac{1}{\sigma^2}\sum_{n = 1}^N (|\lambda_n|^2 - 2|\lambda_n| z_n L_n) , 
  \end{align}
  \fi 
where $L_n = \frac{1}{2\pi I_0(\kappa_n^t)}\int_0^{2\pi} e^{\kappa_n^t\cos(\theta_n - \theta_n^t)} \cos(\theta_n - \angle\lambda_n) \text{d}\theta_n$, and (a) is derived by discarding all terms independent of $\mb{s}$ into $C$. By changing the integral variable from $\theta_n$ to $\theta_n - \theta_n^t$ and applying the periodic property of sinusoidal functions, the integral $L_n$  can be presented as 
\ifx\onecol\undefined
\begin{align} \label{eq:L1P2}
    L_n 
    &= \frac{1}{2\pi I_0(\kappa_n^t)}\int_0^{2\pi} e^{\kappa_n^t\cos\theta_n} \cos(\theta_n   + \theta_n^t - \angle\lambda_n)  \text{d}\theta_n \notag\\
    &\overset{(b)}{=}\frac{1}{2\pi I_0(\kappa_n^t)}\int_0^{2\pi} e^{\kappa_n^t\cos\theta_n}\cos\theta_n\cos(\theta_n^t - \angle\lambda_n)  \text{d}\theta_n \notag\\&\overset{(c)}{=}R(\kappa_n^t)\cos(\theta_n^t - \angle\lambda_n),
\end{align}
\else 
\begin{align} \label{eq:L1P2}
    L_n 
    &= \frac{1}{2\pi I_0(\kappa_n^t)}\int_0^{2\pi} e^{\kappa_n^t\cos\theta_n} \cos(\theta_n   + \theta_n^t - \angle\lambda_n)  \text{d}\theta_n \notag\\
    &\overset{(b)}{=}\frac{1}{2\pi I_0(\kappa_n^t)}\int_0^{2\pi} e^{\kappa_n^t\cos\theta_n}\cos\theta_n\cos(\theta_n^t - \angle\lambda_n)  \text{d}\theta_n \overset{(c)}{=}R(\kappa_n^t)\cos(\theta_n^t - \angle\lambda_n),
\end{align}
\fi
where (b) holds due to $\int_0^{2\pi} e^{\kappa_n^t\cos\theta_n}\sin\theta_n \text{d}\theta_n = 0$ and (c) holds because of the definition of the first-order modified Bessel function $I_1(x) = \frac{1}{2\pi} \int_0^{2\pi} e^{x\cos\theta}\cos\theta\text{d}\theta$. Finally, by bringing \eqref{eq:L1P2} back to \eqref{eq:L1P1}, we can conclude that 
\ifx\onecol\undefined
\begin{align}
&Q(\mb{s}|\mb{s}^t) \notag \\&=  -\frac{1}{\sigma^2}\sum_{n=1}^N\left(
|\lambda_n|^2 - 2|\lambda_n| z_nR(\kappa_n^t)\cos(\theta_n^t - \angle\lambda_n)
\right) + C, \notag\\
&= -\frac{1}{\sigma^2}\sum_{n=1}^N\| z_n e^{i\theta_n^t}R(\kappa_n^t)- \lambda_n\|_2^2 +C, 
\end{align}
\else 
\begin{align}
Q(\mb{s}|\mb{s}^t) &=  C-\frac{1}{\sigma^2}\sum_{n=1}^N\left(
|\lambda_n|^2 - 2|\lambda_n| z_nR(\kappa_n^t)\cos(\theta_n^t - \angle\lambda_n)
\right)   \notag\\
&= C -\frac{1}{\sigma^2}\sum_{n=1}^N\| z_n e^{i\theta_n^t}R(\kappa_n^t)- \lambda_n\|_2^2 = C- \frac{1}{\sigma^2}\| \mb{z}\circ e^{i\boldsymbol{\theta}^t} \circ R(\boldsymbol{\kappa}^t) - \mb{A}^H\mb{s} - \mb{b} \|_2^2 ,
\end{align}
\fi
 which naturally results in the conclusion in Lemma 1.

\subsection{Proof of Lemma 2}
The first and second derivatives of $\log p(z_n; \mb{s})$ are given by
\begin{align}\label{eq:d1}
    \frac{\partial \log p(z_n; \mb{s})}{\partial s_p^*} = \frac{1}{\sigma^2}\left( \frac{z_n}{|\lambda_n|}R(\kappa_n) - 1
    \right) \frac{\partial |\lambda_n|^2}{\partial s_p^*},
\end{align}
\begin{align}\label{eq:d2}
    &\frac{\partial^2 \log p(z_n; \mb{s})}{\partial s_p^* \partial s_q} = \frac{1}{\sigma^2}\left( \frac{z_n}{|\lambda_n|}R(\kappa_n) - 1
    \right) \frac{\partial^2 |\lambda_n|^2}{\partial s_p^* \partial s_q} + \\ &\frac{1}{\sigma^2|\lambda_n|^2}  
    \left( - \frac{z_n}{|\lambda_n|}R(\kappa_n) + \frac{z_n^2 - z_n^2R^2(\kappa_n)}{\sigma^2}
    \right)
    \frac{\partial |\lambda_n|^2}{\partial s_p^*}   \frac{\partial |\lambda_n|^2}{\partial s_q}.  \notag
\end{align}
In deriving \eqref{eq:d1}, the property $I_0'(z) = I_1(z)$ is harnessed, while in deriving \eqref{eq:d1}, the  property $R'(z) = 1 - R^2(z) - \frac{1}{z}R(z)$ is employed~\cite{SpecialFunction_Fox2006}. It was proven in~\cite{PR_Zhu2023} that  $\mathsf{E}\{z_nR(\kappa_n)\} = |\lambda_n|$. Therefore, the expectation of the first derivative is 
\begin{align}
    \mathsf{E}\left(\frac{\partial}{\partial s_p^*}\log p(z_n; \mb{s})\right) = 0,
\end{align}
implying that the regularity condition of CRLB is satisfied. since the second moment of Rician distribution is $\mathsf{E}(z_n^2) = |\lambda_n|^2 + \sigma^2$, the expectation of the second derivative is given as 
\ifx\onecol\undefined
\begin{align}
        \mathsf{E}\left(\frac{\partial^2\log p(z_n; \mb{s})}{\partial s_p^* \partial s_q}\right) &= \frac{|\lambda_n|^2 - \mathsf{E}\{z_n^2R^2(\kappa_n)\}}{\sigma^4|\lambda_n|^2}  
    \frac{\partial |\lambda_n|^2}{\partial s_p^*}   \frac{\partial |\lambda_n|^2}{\partial s_q} \notag\\&= -\frac{\beta_n}{|\lambda_n|^2} \frac{\partial |\lambda_n|^2}{\partial s_p^*}   \frac{\partial |\lambda_n|^2}{\partial s_q},
\end{align}
\else 
\begin{align}
        \mathsf{E}\left(\frac{\partial^2\log p(z_n; \mb{s})}{\partial s_p^* \partial s_q}\right) &= \frac{|\lambda_n|^2 - \mathsf{E}\{z_n^2R^2(\kappa_n)\}}{\sigma^4|\lambda_n|^2}  
    \frac{\partial |\lambda_n|^2}{\partial s_p^*}   \frac{\partial |\lambda_n|^2}{\partial s_q} = -\frac{\beta_n}{|\lambda_n|^2} \frac{\partial |\lambda_n|^2}{\partial s_p^*}   \frac{\partial |\lambda_n|^2}{\partial s_q},
\end{align}
\fi
where $\beta_n \overset{\Delta}{=} \frac{\mathsf{E}\{z_n^2R^2(\kappa_n)\} - |\lambda_n|^2}{\sigma^4}$.
Then, the $(p,q)$-th entry of $\mb{I}$ is derived by $I_{pq} = \sum_{n=1}^{N} \frac{\beta_n}{|\lambda_n|^2}
    \frac{\partial |\lambda_n|^2}{\partial s_p^*}   \frac{\partial |\lambda_n|^2}{\partial s_q}$ and thus the Fisher information matrix is constructed by
\begin{align}
    \mb{I} = \sum_{n = 1}^N \frac{\beta_n}{|\lambda_n|^2} \frac{\partial |\lambda_n|^2}{\partial \mb{s}^*}   \frac{\partial |\lambda_n|^2}{\partial \mb{s}^T},
\end{align} 
where $
    \frac{\partial |\lambda_n|^2}{\partial \mb{s}^*} = \frac{\partial |\mb{a}_n^H\mb{s} + b_n|^2}{\partial \mb{s}^*} = \mb{a}_n\mb{a}_n^H\mb{s} + b_n \mb{a}_n = \lambda_n\mb{a}_n
$.
Finally, substituting $\frac{\partial |\lambda_n|^2}{\partial \mb{s}^*}$ into $\mb{I}$ completes the proof.

\bibliographystyle{IEEEtran}
\bibliography{Reference.bib}

\begin{thebibliography}{10}
\providecommand{\url}[1]{#1}
\csname url@samestyle\endcsname
\providecommand{\newblock}{\relax}
\providecommand{\bibinfo}[2]{#2}
\providecommand{\BIBentrySTDinterwordspacing}{\spaceskip=0pt\relax}
\providecommand{\BIBentryALTinterwordstretchfactor}{4}
\providecommand{\BIBentryALTinterwordspacing}{\spaceskip=\fontdimen2\font plus
\BIBentryALTinterwordstretchfactor\fontdimen3\font minus
  \fontdimen4\font\relax}
\providecommand{\BIBforeignlanguage}[2]{{%
\expandafter\ifx\csname l@#1\endcsname\relax
\typeout{** WARNING: IEEEtran.bst: No hyphenation pattern has been}%
\typeout{** loaded for the language `#1'. Using the pattern for}%
\typeout{** the default language instead.}%
\else
\language=\csname l@#1\endcsname
\fi
#2}}
\providecommand{\BIBdecl}{\relax}
\BIBdecl

\bibitem{6GReview_Zhang2019}
Z.~Zhang, Y.~Xiao, Z.~Ma, M.~Xiao, Z.~Ding, X.~Lei, G.~K. Karagiannidis, and
  P.~Fan, ``{6G} wireless networks: {Vision}, requirements, architecture, and
  key technologies,'' \emph{IEEE Veh. Technol. Mag.}, vol.~14, pp. 28--41, Sep.
  2019.

\bibitem{QuanComp_Gyongyosi2019}
L.~Gyongyosi and S.~Imre, ``A survey on quantum computing technology,''
  \emph{Comput. Sci. Rev.}, vol.~31, pp. 51--71, Feb. 2019.

\bibitem{QuanComm_Wang2022}
C.~Wang and A.~Rahman, ``Quantum-enabled {6G} wireless networks:
  {Opportunities} and challenges,'' \emph{IEEE Wireless Commun.}, vol.~29, pp.
  58--69, Feb. 2022.

\bibitem{QuanSense_Degen2017}
C.~L. Degen, F.~Reinhard, and P.~Cappellaro, ``Quantum sensing,'' \emph{Rev.
  Mod. Phys.}, vol.~89, p. 035002, Jul. 2017.

\bibitem{QuanSense_Zhang2023}
F.~Zhang, B.~Jin, Z.~Lan, Z.~Chang, D.~Zhang, Y.~Jiao, M.~Shi, and J.~Xiong,
  ``Quantum wireless sensing: Principle, design and implementation,'' in
  \emph{Proc. of the 29th Annual International Conference on Mobile Computing
  and Networking (ACM MobiCom'23)}, Oct. 2023, pp. 1--15.

\bibitem{RydMag_Fancher2021}
C.~T. Fancher, D.~R. Scherer, M.~C.~S. John, and B.~L.~S. Marlow, ``Rydberg
  atom electric field sensors for communications and sensing,'' \emph{IEEE
  Trans. Quantum Eng.}, vol.~2, no. 3501313, pp. 1--13, Mar. 2021.

\bibitem{liu_continuous-frequency_2022}
X.~Liu, K.~Liao, Z.~Zhang, H.~Tu, W.~Bian, Z.~Li, S.~Zheng, H.~Li, W.~Huang,
  H.~Yan, and S.~Zhu, ``Continuous-frequency microwave heterodyne detection in
  an atomic vapor cell,'' \emph{Phys. Rev. Appl.}, vol.~18, no.~5, p. 054003,
  Nov. 2022.

\bibitem{liao_microwave_2020}
K.~Liao, H.~Tu, S.~Yang, C.~Chen, X.~Liu, J.~Liang, X.~Zhang, H.~Yan, and
  S.~Zhu, ``Microwave electrometry via electromagnetically induced absorption
  in cold {Rydberg} atoms,'' \emph{Phys. Rev. A}, vol. 101, no.~5, p. 053432,
  May 2020.

\bibitem{RydMag_Liu2023}
B.~Liu, L.~Zhang, Z.~Liu, Z.~Deng, D.~Ding, B.~Shi, and G.~Guo, ``Electric
  field measurement and application based on {Rydberg} atoms,''
  \emph{Electromagn. Sci.}, vol.~1, no.~2, pp. 1--16, Jun. 2023.

\bibitem{zhong_polar_2023}
Y.~Wang, F.~Jia, J.~Hao, Y.~Cui, F.~Zhou, X.~Liu, J.~Mei, Y.~Yu, Y.~Liu,
  J.~Zhang, F.~Xie, and Z.~Zhong, ``Precise measurement of microwave
  polarization using a {Rydberg} atom-based mixer,'' \emph{Opt. Express},
  vol.~31, no.~6, pp. 10\,449--10\,457, Mar. 2023.

\bibitem{liu_deep_2022}
Z.~Liu, L.~Zhang, B.~Liu, Z.~Zhang, G.~Guo, D.~Ding, and B.~Shi, ``Deep
  learning enhanced {Rydberg} multifrequency microwave recognition,'' \emph{Nat
  Commun}, vol.~13, no.~1, p. 1997, Apr. 2022.

\bibitem{RydReview_Saffman2010}
M.~Saffman, T.~G. Walker, and K.~Mølmer, ``Quantum information with {Rydberg}
  atoms,'' \emph{Rev. Mod. Phys.}, vol.~82, no.~3, pp. 2313--2363, Aug. 2010.

\bibitem{RydMag_Art2022}
A.~Artusio-Glimpse, M.~T. Simons, N.~Prajapati, and C.~L. Holloway, ``Modern
  {RF} measurements with hot atoms: A technology review of {Rydberg} atom-based
  radio frequency field sensors,'' \emph{IEEE Microw. Mag.}, vol.~23, no.~5,
  pp. 44--56, May 2022.

\bibitem{liu_space-air-ground_2018}
J.~Liu, Y.~Shi, Z.~M. Fadlullah, and N.~Kato, ``Space-{Air}-{Ground}
  {Integrated} {Network}: {A} {Survey},'' \emph{IEEE Commun. Surv. Tutor.},
  vol.~20, no.~4, pp. 2714--2741, 2018.

\bibitem{Multiband_Sim2020}
M.~S. Sim, Y.-G. Lim, S.~H. Park, L.~Dai, and C.-B. Chae, ``Deep learning-based
  {mmWave} beam selection for {5G} {NR/6G} with sub-6 {GHz} channel
  information: Algorithms and prototype validation,'' \emph{IEEE Access},
  vol.~8, pp. 51\,634--51\,646, 2020.

\bibitem{RydMultiband_Du2022}
Y.~Du, N.~Cong, X.~Wei, X.~Zhang, W.~Luo, J.~He, and R.~Yang, ``Realization of
  multiband communications using different {Rydberg} final states,'' \emph{AIP
  Adv.}, vol.~12, no.~6, p. 065118, Jun. 2022.

\bibitem{RydAMFM_Anderson2021}
D.~A. Anderson, R.~E. Sapiro, and G.~Raithel, ``An atomic receiver for {AM} and
  {FM} radio communication,'' \emph{IEEE Trans. Antennas Propag.}, vol.~69,
  no.~5, pp. 2455--2462, May 2021.

\bibitem{RydPhase_Simons2019}
M.~T. Simons, A.~H. Haddab, J.~A. Gordon, and C.~L. Holloway, ``A {Rydberg}
  atom-based mixer: {Measuring} the phase of a radio frequency wave,''
  \emph{Appl. Phys. Lett.}, vol. 114, no.~11, p. 114101, Mar. 2019.

\bibitem{RydNP_Jing2020}
M.~Jing, Y.~H. Hu, J.~Ma, H.~Zhang, L.~Zhang, L.~Xiao, and S.~Jia, ``Atomic
  superheterodyne receiver based on microwave-dressed {Rydberg} spectroscopy,''
  \emph{Nat. Phys.}, vol.~1, pp. 911--915, Jun. 2020.

\bibitem{RydPhase_And2020}
D.~A. Anderson, R.~E. Sapiro, and G.~Raithel, ``Rydberg atoms for
  radio-frequency communications and sensing: Atomic receivers for pulsed {RF}
  field and phase detection,'' \emph{IEEE Aerosp. Electron. Syst. Mag.},
  vol.~35, no.~4, pp. 48--56, 2020.

\bibitem{RydPhase_Meyer2018}
D.~H. Meyer, K.~C. Cox, F.~K. Fatemi, and P.~D. Kunz, ``Digital communication
  with {Rydberg} atoms and amplitude-modulated microwave fields,'' \emph{Appl.
  Phys. Lett.}, vol. 112, no.~21, p. 211108, May 2018.

\bibitem{RydAOA_Robinson2021}
A.~K. Robinson, N.~Prajapati, D.~Senic, M.~T. Simons, and C.~L. Holloway,
  ``Determining the angle-of-arrival of a radio-frequency source with a
  {Rydberg} atom-based sensor,'' \emph{Appl. Phys. Lett.}, vol. 118, no.~11, p.
  114001, Mar. 2021.

\bibitem{RydMultiband_Meyer2023}
D.~H. Meyer, J.~C. Hill, P.~D. Kunz, and K.~C. Cox, ``Simultaneous multiband
  demodulation using a rydberg atomic sensor,'' \emph{Phys. Review Appl.},
  vol.~19, p. 014025, Jan. 2023.

\bibitem{MIMO_LU2014}
L.~Lu, G.~Y. Li, A.~L. Swindlehurst, A.~Ashikhmin, and R.~Zhang, ``An overview
  of massive {MIMO}: Benefits and challenges,'' \emph{{IEEE} J. Selected Top.
  Signal Process.}, vol.~8, no.~5, pp. 742--758, Oct. 2014.

\bibitem{RIS_Zhang2023}
Z.~Zhang, L.~Dai, X.~Chen, C.~Liu, F.~Yang, R.~Schober, and H.~V. Poor,
  ``Active {RIS} vs. passive {RIS}: Which will prevail in {6G}?'' \emph{IEEE
  Trans. Commun.}, vol.~71, no.~3, pp. 1707--1725, Mar. 2023.

\bibitem{EdgeAI_Mao2017}
Y.~Mao, C.~You, J.~Zhang, K.~Huang, and K.~B. Letaief, ``A survey on mobile
  edge computing: The communication perspective,'' \emph{IEEE Commun. Surv.
  Tutor.}, vol.~19, no.~4, pp. 2322--2358, 2017.

\bibitem{PR_Netrapalli2015}
P.~Netrapalli, P.~Jain, and S.~Sanghavi, ``Phase retrieval using alternating
  minimization,'' \emph{IEEE Trans. Signal Process.}, vol.~63, no.~18, pp.
  4814--4826, Sep. 2015.

\bibitem{AtomicPhysics}
C.~J.~Foot, \emph{Atomic Physics}.\hskip 1em plus 0.5em minus 0.4em\relax
  Oxford University Press, 2005.

\bibitem{QuantumMechanism_Zwiebach2006}
B.~Zwiebach, \emph{Mastering Quantum Mechanics}.\hskip 1em plus 0.5em minus
  0.4em\relax {MIT} Press, 2022.

\bibitem{QuantumOptics_Fox2006}
M.~Fox, \emph{Quantum Optics: {An} Introduction}.\hskip 1em plus 0.5em minus
  0.4em\relax Oxford University Press, 2006.

\bibitem{PR_Dong2023}
J.~Dong, L.~Valzania, A.~Maillard, T.-a. Pham, S.~Gigan, and M.~Unser, ``Phase
  retrieval: {From} computational imaging to machine learning: {A} tutorial,''
  \emph{IEEE Signal Process. Mag.}, vol.~40, no.~1, pp. 45--57, Jan. 2023.

\bibitem{QSN_Bussey2022}
L.~W. Bussey, F.~A. Burton, K.~Bongs, J.~Goldwin, and T.~Whitley, ``Quantum
  shot noise limit in a {Rydberg} {RF} receiver compared to thermal noise limit
  in a conventional receiver,'' \emph{IEEE Sens. Lett.}, vol.~6, no.~9, pp.
  1--4, 2022.

\bibitem{PR_Candes2015}
E.~J. Candès, X.~Li, and M.~Soltanolkotabi, ``Phase retrieval via {Wirtinger}
  flow: Theory and algorithms,'' \emph{{IEEE} Trans. Inf. Theory}, vol.~61,
  no.~4, pp. 1985--2007, Apr. 2015.

\bibitem{Mises_Gattoa2007}
R.~Gattoa and S.~R. Jammalamadaka, ``The generalized von {Mises}
  distribution,'' \emph{Stat. Methodol.}, vol.~4, pp. 341--353, Jul. 2007.

\bibitem{SpecialFunction_Fox2006}
N.~N. Lebedev and R.~A. Silverman, \emph{Special Functions and Their
  Applications}.\hskip 1em plus 0.5em minus 0.4em\relax USA: Courier
  Corporation, 1972.

\bibitem{SIBALIC2017319}
N.~Šibalić, J.~Pritchard, C.~Adams, and K.~Weatherill, ``Arc: An open-source
  library for calculating properties of alkali {Rydberg} atoms,'' \emph{Comput.
  Phys. Commun.}, vol. 220, pp. 319--331, 2017.

\bibitem{CM_Yue2019}
Y.~Li, R.~K. Mallik, and R.~Murch, ``Channel magnitude-based {MIMO} with energy
  detection for {Internet} of {Things} applications,'' \emph{IEEE Internet
  Things J.}, vol.~6, no.~6, pp. 9893--9907, Dec. 2019.

\bibitem{PR_Zhu2023}
J.~Zhu, K.~Liu, Z.~Wan, L.~Dai, T.~J. Cui, and H.~V. Poor, ``Sensing {RISs}:
  {Enabling} dimension-independent {CSI} acquisition for beamforming,''
  \emph{IEEE Trans. Inf. Theory}, vol.~69, no.~6, pp. 3795--3813, Jun. 2023.

\end{thebibliography}


\end{document}